\newcommand\fauxsc[1]{\fauxschelper#1 \relax\relax}
\def\fauxschelper#1 #2\relax{%
  \fauxschelphelp#1\relax\relax%
  \if\relax#2\relax\else\ \fauxschelper#2\relax\fi%
}
\def\Hscale{.85}\def\Vscale{.72}\def\Cscale{1.10}
\def\fauxschelphelp#1#2\relax{%
  \ifnum`#1>``\ifnum`#1<`\{\scalebox{\Hscale}[\Vscale]{\uppercase{#1}}\else%
    \scalebox{\Cscale}[1]{#1}\fi\else\scalebox{\Cscale}[1]{#1}\fi%
  \ifx\relax#2\relax\else\fauxschelphelp#2\relax\fi}
\newcommand{\set}[1]{\{#1\}}
\newcommand{\modulus}[1]{\vert #1 \vert}
\newcommand{\calG}{\mathcal{G}}
\newcommand{\cost}{\mathit{cost}}
\newcommand{\Mod}[1]{\ \mathrm{mod}\ #1}
\newcommand{\wonderful}{{\fauxsc{Wonderful-Partition}}}
\newcommand{\wonderfulIntervals}{{\fauxsc{Wonderful-Partition-Intervals}}}
\newcommand{\minEg}{{\fauxsc{Min-Eg}}}
\title{Solving Woeginger's Hiking Problem: Wonderful Partitions in Anonymous Hedonic Games}
\titlerunning{Solving Woeginger's Hiking Problem} 
\author{Andrei Constantinescu\footnote{Corresponding author.}}{ETH Zürich, Zürich, Switzerland \and \url{https://andrei1998.github.io/}}{aconstantine@ethz.ch}{https://orcid.org/0009-0005-1708-9376}{}
\author{Pascal Lenzner}{Hasso Plattner Institute, University of Potsdam, Potsdam, Germany}{pascal.lenzner@hpi.de}{https://orcid.org/0000-0002-3010-1019}{}
\author{Rebecca Reiffenhäuser}{University of Amsterdam, Amsterdam, The Netherlands}{r.e.m.reiffenhauser@uva.nl}{https://orcid.org/0000-0002-0959-2589}{}
\author{Daniel Schmand}{University of Bremen, Bremen, Germany}{schmand@uni-bremen.de}{https://orcid.org/0000-0001-7776-3426}{}
\author{Giovanna Varricchio}{University of Calabria, Rende, Italy}{giovanna.varricchio@unical.it}{https://orcid.org/0000-0001-6839-8551}{The author is grateful for the support by the PNRR MUR project FAIR - Future AI Research (PE00000013), and the DFG, German Research Foundation, grant (Ho 3831/5-1).}
\authorrunning{A. Constantinescu, P. Lenzner, R. Reiffenhäuser, D. Schmand and G. Varricchio} 
\keywords{Algorithmic Game Theory, Dynamic Programming, Anonymous Hedonic Games, Single-Peaked Preferences, Social Optimum, Wonderful Partitions} 
\begin{document}

\maketitle
\begin{abstract}
A decade ago, Gerhard Woeginger posed an open problem that became well-known as ``Woeginger's Hiking Problem'': Consider a group of $n$ people that want to go hiking; everyone expresses preferences over the size of their hiking group in the form of an interval between $1$ and $n$. Is it possible to efficiently assign the $n$ people to a set of hiking subgroups so that every person approves the size of their assigned subgroup? The problem is also known as efficiently deciding if an instance of an anonymous Hedonic Game with interval approval preferences admits a wonderful partition.

We resolve the open problem in the affirmative by presenting an $O(n^5)$ time algorithm for Woeginger's Hiking Problem. Our solution is based on employing a dynamic programming approach for a specific rectangle stabbing problem from computational geo\-metry. Moreover, we propose natural, more demanding extensions of the problem, e.g., maximizing the number of satisfied participants and variants with single-peaked preferences, and show that they are also efficiently solvable.
Last but not least, we employ our solution to efficiently compute a partition that maximizes the egalitarian welfare for anonymous single-peaked Hedonic Games.
\end{abstract}

\section{Introduction}
Suppose there are $n$ attendees of a workshop, who aim to go for a joint hike during a break. Keeping the whole group together is logistically challenging, so typically the attendees split into smaller subgroups that will do the hike together. 
It is natural that different attendees might have different preferences for the sizes of the subgroups they will eventually join. In particular, each attendee $i$ reports an interval $[\ell_i, r_i]$ signifying that they will be content if they are in a group of size $s \in [\ell_i, r_i]$. The organizers of the hike now face the following problem: Is there a polynomial time algorithm that determines whether there is a partition of the attendees into subgroups such that they are all content with the sizes of their subgroups?

In 2013, Gerhard Woeginger famously phrased this problem underlying an anonymous hedonic game to explain an open question stemming from his work~\cite{woeginger2013core}. It is one of the very nice problems that he used to share at various occurrences, e.g., at the coffee machine, while waiting in a seminar room, or even during a joint walk.\footnote{Gerhard's ability to explain open research questions in an easily accessible way has been extraordinarily motivating. In particular, this work would not have started without Gerhard meeting one of the authors for lunch and explaining the problem exactly in this way. With this work we contribute to the recent line of publications celebrating the life and work of Gerhard Woeginger, see \cite{in_memoriam_gerhard}.} As this is one of the problems he explained to many people, it became known as Woeginger's \emph{Hiking Problem}.

In this work we answer Woeginger's question in the affirmative by exploiting a tight connection to a variant of the rectangle stabbing problem from computational geometry that can be solved in polynomial time via elegant dynamic programming. 

\begin{figure}[t]%
\begin{center}
\centering%
 \includegraphics[width=.33\linewidth]{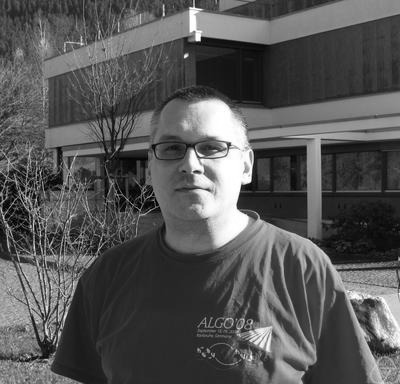}%
 \caption{Gerhard Woeginger, Oberwolfach, 2011. \cite{photo_gerhard}}%
 \label{fig:gerhard}%
\end{center}
\end{figure}%

Using Woeginger's motivation of the problem, many natural extensions arise that we introduce in this paper.
If the sought partition does not exist, then what is the maximum number of attendees that can be satisfied with their group sizes, or what is the minimum number of attendees to exclude such that the remaining ones can be partitioned to all become satisfied? 
Moreover, we also consider a version where the hikers have single-peaked preferences over the group sizes and a partition is sought that minimizes the utilitarian or egalitarian cost, where cost is defined as a function of the assigned and the ideal group size.
We show that these more demanding problems can also be solved efficiently.
Finally, we discuss the relationship between the hiking problem and the problem of maximizing the egalitarian welfare in (general) anonymous Hedonic Games.


\subsection{Related Work}

\emph{Hedonic Games} (HGs), introduced by Dreze and Greenberg in~\cite{dreze1980hedonic}, model multi-agent systems where selfish agents have to be partitioned into coalitions and have preferences over the possible outcomes. Such games are called hedonic as agents’ preferences only depend on the coalition they belong to but not on how the other agents are grouped. 
HGs have been widely studied (see~\cite{aziz2016hedonic} for a survey) and numerous prominent subclasses have been identified based on properties of the agents' preferences or other possible constraints:~\cite{ballester2004np,bogomolnaia2002stability,alcalde2004researching,olsen2009nash,Elkind09, aziz2019fractional}. 
Simple examples of such classes are anonymous HGs~\cite{banerjee2001core}, where the preferences of the agents depend only on the size of their coalition and not on the individual participants, or HGs with approval-based (Boolean) preferences~\cite{aziz2016boolean}, where agents have binary values for the coalitions. Woeginger’s Hiking Problem resides in the intersection of these classes.

The HGs literature has typically focused on the existence and computation of stable or optimal solutions, see, e.g., \cite{bogomolnaia2002stability}. The most desirable, ideal partition is the one where every agent is assigned to one of her best coalitions --- called {\em perfect} (or {\em wonderful}) {\em partition}~\cite{aziz2013pareto}. Unfortunately, such a solution rarely exists and the related decision or computational problem is usually hard~\cite{aziz2013pareto,peters2016complexity}.
Even in simple cases such as anonymous approval-based HGs, it is NP-complete to determine the existence of a wonderful partition~\cite{Darmann_journal,woeginger2013core}. 
Such a result holds true even if the number of approved coalition sizes of each agent is at most $2$~\cite{Darmann_journal}.
The problem of finding a wonderful partition in Hedonic Games is related to the one of maximizing the utilitarian welfare, i.e.\ the sum of agents' utilities, for Boolean utilities. An overall picture of the complexity of finding wonderful partitions in Boolean Hedonic Games, including anonymous ones, is given in~\cite{peters2016complexity}. For general utility functions, the problem of maximizing the utilitarian or egalitarian social welfare has been studied in various settings, e.g., in fractional~\cite{AGGMT15} and additively separable Hedonic Games~\cite{ABG16}. To the best of our knowledge, these objectives have not been considered in the context of anonymous Hedonic Games.

In our paper, we show the tractability of computing a wonderful partition for instances with interval approvals as formulated by Woeginger~\cite{woeginger2013core}. To this aim, we provide a dynamic program that relies on the approach used in~\cite{rectangle_stabbing} to solve a capacitated rectangle stabbing problem from computational geometry. 
 Here, the goal is to stab a set of rectangles with a minimum subset of a given set of lines, each intersecting (i.e., potentially \emph{stabbing}) some of the rectangles. Each line has a maximum number of rectangles it can stab, and to stab all rectangles, one is allowed to use multiple copies of each line.

 We also consider variants with single-peaked cost functions of the agents. Single-peaked preferences date back to Black~\cite{black1948rationale}. Such preferences are a common theme in the Economics and Game Theory literature. In particular, they play a prominent role in different fields such as Hedonic Diversity Games~\cite{BredereckEI19,BoehmerE20}, Schelling Games~\cite{BiloBLM22,FriedrichLMS23}, and in various works on voting and social choice~\cite{Walsh07,YCE13,BSU13,ElkindFS20,BBHH15}.

\subsection{Model}
The hiking problem as formulated by Gerhard Woeginger is a special case of anonymous Hedonic Games with approval-based preferences.
In an anonymous Hedonic Game with approval-based preferences we are given a set $N$ of $n$ agents to be partitioned into coalitions. Each agent $i \in N$ reports an approval set $S_i \subseteq [n]$ representing the approved group sizes for agent $i$. In particular, agent $i$ wants to be in some group of size $s$ such that $s\in S_i.$ An approval set $S_i$ is said to be an \emph{interval} if $S_i=\set{\ell_i, \ell_i+1, \dots, r_i}$, for some $1 \leq \ell_i \leq r_i \leq n$. 
The agents have to be partitioned into coalitions, i.e., subsets of the agent set. This induces a \emph{partition} $\pi$ of the set of agents $N$. We denote by $\pi(i)$ the coalition agent $i$ belongs to in the partition $\pi$.
We follow the notation in Woeginger's survey paper \cite{woeginger2013core} and call a partition $\pi$ \emph{wonderful} if each agent approves of the size of its coalition in $\pi$, i.e., for each agent $i$, $\modulus{\pi(i)}\in S_i$. This leads to the following natural computational problem called \fauxsc{Wonderful-Partition}.

\begin{tcolorbox}
\textbf{\fauxsc{Wonderful-Partition}}\\
\textbf{Input}: A set $N$ of agents and size approval sets $(S_i)_{i \in N}.$ \\
\textbf{Problem}: Decide whether there exists a wonderful partition of the agents. If yes, compute one.
\end{tcolorbox}

Woeginger's Hiking Problem is \fauxsc{Wonderful-Partition} with interval approval sets, which we formally define as follows.

\begin{tcolorbox}
\textbf{\fauxsc{Wonderful-Partition-Intervals} (\fauxsc{Hiking})}\\
\textbf{Input}: A set $N$ of agents and for each agent $i \in N$ two numbers $\ell_i \leq r_i$ such that $S_i = \set{\ell_i, \ldots, r_i}.$ \\
\textbf{Problem}: Decide whether there exists a wonderful partition of the agents. If yes, compute one.
\end{tcolorbox}

We also consider natural extensions of the hiking problem, to be introduced later.

\subsection{Our Contribution}
We solve Woeginger's Hiking Problem in the affirmative by giving an $O(n^5)$-algorithm that computes a wonderful partition for an instance with $n$ agents that each have interval approval sets (see \Cref{fig:utilities}~(a)), if such a partition exists. For this we use a dynamic programming approach for a rectangle stabbing problem from computational geometry. Moreover, we extend this approach to achieve an $O(n^5)$-algorithm for computing the minimum set of hikers that have to be excluded from the hike, in order for a wonderful partition to become possible. We also give an $O(n^7)$-algorithm for deciding if a wonderful partition exists if exactly $x$ hikers are excluded. This is then used to derive an $O(n^7 \log n)$-algorithm for finding a partition that maximizes the number of hikers that approve of their assigned group size. These approaches can also be extended to a setting where hikers have weights.
\begin{figure}[b]
\includegraphics[width=\textwidth]{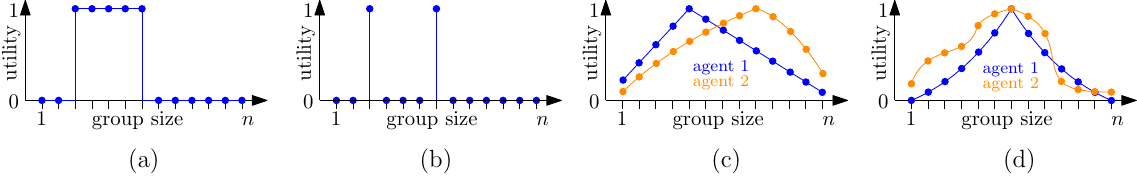}
\caption{Examples of utility functions considered in this paper. (a) interval approval sets, (b) non-interval approval sets with two approved group sizes, (c) utility functions are single-peaked and all agents with the same peak have the same utility function (under some additional mild technical assumptions), (d) individual single-peaked utility functions for all agents (as shown, the functions may differ even if they have the same peak).}
\label{fig:utilities}
\end{figure}

All these positive results hold for the case where the agents have interval approval sets (\Cref{fig:utilities}~(a)). The special case where every hiker only approves of one group size, i.e., approval intervals of size $1$, 
can be solved efficiently by checking if for all $i$ the number of agents approving only of size $i$ is divisible by $i.$
In contrast to this, it was already known that the problem is NP-hard even if each attendee has at most two different approved group sizes that need not form an interval~\cite{Darmann_journal} (see \Cref{fig:utilities}~(b)). One can modify the original proof to establish hardness even for the case where each attendee approves of \emph{exactly} two group sizes. For the sake of completeness and to improve readability, we show the details in \cref{app:hardness}. Our presentation hinges on an elegant connection to a graph orientation problem that has not been used in the previous proof.

We complete the picture by considering agents with single-peaked preferences over group sizes. In this setting an agent has a cost that is a function of its assigned group size and its ideal group size. If all agents incur a cost that is given by a fixed function that depends on their peak (but different peak values are allowed, see \Cref{fig:utilities}~(c)) and if this function satisfies a mild technical condition, we compute a partition that minimizes the social cost in time $O(n^2(\alpha+1))$, if at most $\alpha$ agents can be excluded from the hike. This holds for the utilitarian setting, i.e., minimizing the sum of the agents' costs, and also for the egalitarian setting, where the maximum agent cost is to be minimized. Finally, we prove polynomial time equivalence of the wonderful partition problem and the problem of minimizing the egalitarian social cost in (general) anonymous Hedonic Games. We use this result to show that even for the setting where each agent has its own single-peaked cost function (see \Cref{fig:utilities}~(d)), a partition that minimizes the egalitarian social cost can be computed in $O(n^5\log n)$. 


\section{Efficient Algorithm for Woeginger's Hiking Problem, and Extensions}\label{sec:hiking}
We prove that \wonderfulIntervals\ can be solved in polynomial time by casting it as a version of capacitated rectangle stabbing that is efficiently solvable via an elegant dynamic programming approach introduced by Even, Levi, Rawitz, Schieber, Shahar, and Srividenko~\cite{rectangle_stabbing}. In the capacitated rectangle stabbing problem, we are given axis-parallel rectangles and a set of axis-parallel lines. The goal is to find a minimum number of lines that intersect and `stab' all rectangles, or, in a version where lines have costs, lines with minimum total cost with the same property. Each line has a maximum number of intersected rectangles that it can `stab', and, to stab all rectangles, one is allowed to use multiple copies of each line. Woeginger's Hiking Problem can be seen as a special case of the one-dimensional capacitated rectangle stabbing problem with costs, where the agents' intervals correspond to one-dimensional rectangles and stabbing rectangles at some integer position $i$ costs $i$ and corresponds to creating a group of size $i$ in which we include the agents corresponding to the stabbed rectangles. Solutions with total cost $n$ (which is also a lower bound on the cost) then correspond to those where all groups are full.
Building on the approach of \cite{rectangle_stabbing} applied to this case, we give a simpler $\mathcal{O}(n^5)$ dynamic program that is specifically tailored to our needs. 
Note that this reduction from our partition problem to a version of rectangle stabbing draws a powerful new connection between the problem types that is not limited to one specific variant of the partition problem, but also holds more generally; e.g., it can be adapted to the variant where there can be at most one group of each size.

In the following, we use the central observations of Even, Levi, Rawitz, Schieber, Shahar, and Srividenko~\cite{rectangle_stabbing} to derive a dynamic programming solution to Woeginger's Hiking problem, i.e., to \wonderfulIntervals. Its definition and analysis will then also power our results on the more demanding problem variants discussed in the introduction. 

\subsection{Dynamic Program for Woeginger's Hiking Problem}\label{sec:polyDP}
Consider an instance of \fauxsc{Wonderful-Partition-Intervals}, i.e., a set $N$ of $n$ agents, without loss of generality $N = [n]$, and $n$ pairs $(\ell_i, r_i)_{i \in N}$ such that agent $i \in [n]$ approves of sizes $S_i = \set{\ell_i, \ldots, r_i}$. We are interested in checking whether there exists a wonderful partition of the agents; i.e., one where every agent approves of their coalition size. For consistent tie-breaking reasons, throughout this section we write $i \prec j$ for two distinct agents $i, j \in [n]$ if either $r_i < r_j,$ or $r_i = r_j$ and $i < j.$ Note that $\prec$ is a well-defined strict linear order and, without loss of generality, assume that agents are ordered so that $1 \prec \ldots \prec n.$

As the main technical ingredient of our approach, given a subset $N' \subseteq N$ of agents, we say that a wonderful partition $\pi$ of $N'$ is \emph{earliest-due-date} if for any two agents $i \prec j$ it does not hold that $ \ell_i \leq \modulus{\pi(j)} < \modulus{\pi(i)}.$ This terminology is borrowed from the scheduling literature, and based on the following well-known observation:~let us view the agents' allowed intervals of coalition sizes on an axis labeled with the natural numbers $1,\ldots,n$, and consider an arbitrary collection of coalition sizes we might decide on using, in non-decreasing order of coalition size. Then, going in this order, it is always safe to include those agents first whose right interval endpoints are the smallest: in said order of coalitions, they are the ones that stop being servable first, making an earliest-due-date approach impose the least restrictions on later assignments. In other words, whenever a wonderful partition into the given coalition sizes exists, there also exists one that is earliest-due-date. We prove this formally below for completeness. Note that, to make the process well-defined, instead of comparing agents by right endpoints, we compare them by $\prec$. 

\begin{restatable}{lemma}{lemmaone} If a subset $N' \subseteq N$ of agents admits a wonderful partition, then it admits an earliest-due-date wonderful partition. 
\end{restatable}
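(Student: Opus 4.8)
The plan is to prove the statement by an exchange argument that transforms an arbitrary wonderful partition of $N'$ into an earliest-due-date one without changing the multiset of coalition sizes used. First I would fix a wonderful partition $\pi$ of $N'$ and record the multiset of coalition sizes $(s_1 \le s_2 \le \dots \le s_k)$ it induces; note that any partition of $N'$ using exactly these coalition sizes has all coalitions still of an approved size for their members only if we are careful about \emph{which} agent goes where, so the target is to re-shuffle agents among the fixed slots. I would then define a canonical reassignment: process the coalition slots in non-decreasing order of size, and greedily fill each slot with the $\prec$-smallest not-yet-assigned agents that approve of that slot's size. The claim is (i) this greedy process always succeeds in producing a valid wonderful partition when started from a feasible multiset of sizes, and (ii) the resulting partition is earliest-due-date.

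For part (i), I would argue by a standard matroid/Hall-type or direct exchange argument: suppose at some slot of size $s$ the greedy run gets stuck, i.e., it has exhausted all agents approving size $s$ among the unassigned ones but the slot is not full; compare with $\pi$ restricted to the slots of size $\le s$ and derive a counting contradiction, since every agent approving a size $\le s$ that $\pi$ placed in one of those slots is also a candidate the greedy algorithm could have used, so greedy cannot run short. More cleanly, I would phrase it as: among all wonderful partitions of $N'$ using the multiset $(s_j)$, pick the one that is lexicographically earliest-due-date–like, and show it must coincide with the greedy output; but the cleanest route is probably the explicit single-swap exchange described next, which directly yields earliest-due-date-ness.

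The core of the argument is the exchange step for (ii): suppose $\pi$ is a wonderful partition of $N'$ that is \emph{not} earliest-due-date, so there exist agents $i \prec j$ with $\ell_i \le \modulus{\pi(j)} < \modulus{\pi(i)}$. Since $j$'s coalition size $\modulus{\pi(j)}$ lies in $S_j = \set{\ell_j,\dots,r_j}$ and $\ell_i \le \modulus{\pi(j)}$, while also $\modulus{\pi(j)} < \modulus{\pi(i)} \le r_i$, we get $\modulus{\pi(j)} \in S_i$ as well; symmetrically, $\modulus{\pi(i)}$ satisfies $\modulus{\pi(i)} > \modulus{\pi(j)} \ge \ell_j$ and $\modulus{\pi(i)} \le r_i \le r_j$ (using $i \prec j$, which gives $r_i \le r_j$), so $\modulus{\pi(i)} \in S_j$. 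Hence swapping $i$ and $j$ between their coalitions yields another wonderful partition $\pi'$ of $N'$ with the same multiset of coalition sizes. The hard part — and the main obstacle — is showing this process terminates: a naive potential like $\sum_i \modulus{\pi(i)}$ is invariant under the swap, so I need a finer measure. I would use a potential such as $\Phi(\pi) = \sum_{i \in N'} (\text{rank of } i \text{ under } \prec) \cdot \modulus{\pi(i)}$, or equivalently count inversions: the swap moves the $\prec$-smaller agent $i$ to the strictly smaller coalition and the $\prec$-larger agent $j$ to the strictly larger one, strictly decreasing such a weighted sum (since $i$ has smaller $\prec$-rank but was in the larger coalition). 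As $\Phi$ is a non-negative integer that strictly decreases with each exchange, the process terminates at a partition with no violating pair, i.e., an earliest-due-date wonderful partition of $N'$, completing the proof.
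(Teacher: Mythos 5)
Your core argument is the same as the paper's: locate a violating pair $i \prec j$ with $\ell_i \le \modulus{\pi(j)} < \modulus{\pi(i)}$, verify (exactly as you do, using $i \prec j \Rightarrow r_i \le r_j$) that swapping $i$ and $j$ between their coalitions preserves wonderfulness, and iterate until no violation remains; the greedy/Hall-type construction you sketch first is unnecessary and you rightly set it aside. The only substantive difference is the termination certificate: the paper notes that each swap strictly increases the sequence $(\modulus{\pi(n)},\dots,\modulus{\pi(1)})$ lexicographically, whereas you use $\Phi(\pi)=\sum_i \mathrm{rank}(i)\cdot\modulus{\pi(i)}$ --- but with the natural convention (rank increasing along $\prec$) each swap changes $\Phi$ by $(\modulus{\pi(i)}-\modulus{\pi(j)})\cdot(\mathrm{rank}(j)-\mathrm{rank}(i))>0$, so $\Phi$ strictly \emph{increases} rather than decreases as you claim; the argument still goes through because $\Phi$ is an integer bounded above by $n\sum_i \mathrm{rank}(i)$, but your termination sentence as written has the direction backwards and should be fixed.
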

\begin{proof} Consider a wonderful partition $\pi$ of $N'.$ If it satisfies the required property, then we are done, otherwise, consider $i, j \in N'$ such that $i\prec j$ and $\ell_i \leq \modulus{\pi(j)} < \modulus{\pi(i)}.$ These conditions imply that $\modulus{\pi(j)} \in [\ell_i, r_i]$ and $\modulus{\pi(i)} \in [\ell_j, r_j].$ Hence, one can construct a new wonderful partition $\pi'$ from $\pi$ by exchanging the groups of agents $i$ and $j$; i.e., $\pi'(i) = (\pi(j) \setminus \set{j}) \cup \set{i}$ and $\pi'(j) = (\pi(i) \setminus \set{i}) \cup \set{j}.$ Subsequently, set $\pi \gets \pi'$ and repeat the same procedure until the required condition is satisfied. To complete the proof, we need to show that this is eventually the case. To do so, since $1 \prec \ldots \prec n,$ each exchange strictly increases the sequence $(\modulus{\pi(n)}, \ldots,\modulus{\pi(1)})$ lexicographically. Because this sequence is bounded from above by $(n, \ldots, n),$ the process eventually ends.
\end{proof}

\noindent Hence, from now on, we will only seek earliest-due-date wonderful partitions. This crucial restriction, which we have shown to be without loss of generality, will allow us to bootstrap a dynamic programming algorithm that determines a wonderful partition if one exists.

To construct our algorithm, we first show that earliest-due-date partitions admit an attractive recursive decomposition. 
Consider an earliest-due-date wonderful partition $\pi$ of the agents (if any exist), and consider the size of the coalition that agent $n$ is a part of; i.e., $\modulus{\pi(n)}.$ Moreover, consider an arbitrary agent $i \neq n.$ Agent $i$ is part of a coalition of size $\modulus{\pi(i)}.$ Because $\pi$ is earliest-due-date, by definition it can not be the case that $\ell_i \leq \modulus{\pi(n)} < \modulus{\pi(i)},$ so either $\modulus{\pi(n)} < \ell_i$ or $\modulus{\pi(i)} \leq \modulus{\pi(n)}$ holds.

\begin{restatable}{lemma}{lemRecursiveDecomp} \label{lemma:recursive-decomposition} Consider an earliest-due-date wonderful partition $\pi$ of $N.$ Partition the agents as $N = N_{-} \cup N_{+} \cup \{n\},$ where $N_{-} = \set{i \in N \setminus \{n\} \mid \ell_i \leq \modulus{\pi(n)}}$ and $N_{+} = \set{i \in N \setminus \{n\} \mid \ell_i > \modulus{\pi(n)}},$ then it holds that:

\begin{enumerate}
    \item  \label{item_1} For all $i \in N_{-}$ we have $\modulus{\pi(i)} \leq \modulus{\pi(n)}$;
    \item \label{item_2} For all $i \in N_{+}$ we have $\modulus{\pi(i)} > \modulus{\pi(n)}.$
\end{enumerate}%
\end{restatable}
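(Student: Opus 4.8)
The plan is to obtain both items essentially for free from the earliest-due-date property, the key point being that $n$ is $\prec$-maximal. Since the agents are indexed so that $1 \prec \ldots \prec n$, every agent $i \in N \setminus \set{n}$ satisfies $i \prec n$, so I would first instantiate the definition of earliest-due-date with the ordered pair $(i, n)$: for every such $i$ it does not hold that $\ell_i \leq \modulus{\pi(n)} < \modulus{\pi(i)}$, which is exactly the dichotomy already observed just before the lemma, namely that either $\ell_i > \modulus{\pi(n)}$ or $\modulus{\pi(i)} \leq \modulus{\pi(n)}$.

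For item~\ref{item_1}, I would fix $i \in N_{-}$, so $\ell_i \leq \modulus{\pi(n)}$ by definition of $N_{-}$; the first alternative of the dichotomy is then ruled out, forcing $\modulus{\pi(i)} \leq \modulus{\pi(n)}$, as desired. For item~\ref{item_2}, I would fix $i \in N_{+}$, so $\ell_i > \modulus{\pi(n)}$; here the earliest-due-date hypothesis is not even needed, since $\pi$ being wonderful gives $\modulus{\pi(i)} \in S_i = \set{\ell_i, \ldots, r_i}$ and hence $\modulus{\pi(i)} \geq \ell_i > \modulus{\pi(n)}$.

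I do not expect a genuine obstacle here: the only things to be careful about are recalling that $i \prec n$ for all $i \neq n$ (so the earliest-due-date condition can legitimately be applied with $j = n$), and noticing that item~\ref{item_2} is purely a consequence of wonderfulness while the earliest-due-date assumption is used only for item~\ref{item_1}. Conceptually, this is precisely what makes the decomposition a useful basis for the dynamic program: whether an agent lands in $N_{-}$ or $N_{+}$ depends only on $\ell_i$ and the single value $\modulus{\pi(n)}$, yet this already pins down on which side of $\modulus{\pi(n)}$ that agent's coalition size must fall, so that the sub-instances on $N_{-}$ and $N_{+}$ decouple cleanly.
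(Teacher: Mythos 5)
Your proposal is correct and follows essentially the same argument as the paper: item~2 from wonderfulness alone ($\modulus{\pi(i)} \geq \ell_i > \modulus{\pi(n)}$), and item~1 by applying the earliest-due-date condition to the pair $(i,n)$ with $i \prec n$ (the paper phrases this as a proof by contradiction, but it is the same dichotomy). No gaps.
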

\begin{proof} For (\ref{item_2}) note that any $i \in N_{+}$ by definition satisfies $\ell_i > \modulus{\pi(n)}.$ Since we have the requirement that $\modulus{\pi(i)} \in [\ell_i, r_i],$ this means that $\modulus{\pi(i)} \geq \ell_i > \modulus{\pi(n)},$ as desired.

For (\ref{item_1}), consider some $i \in N_{-}.$ By definition, we have that $\ell_i \leq \modulus{\pi(n)}.$ Assume for a contradiction that $\modulus{\pi(i)} > \modulus{\pi(n)}.$ This implies that $\ell_i \leq \modulus{\pi(n)} < \modulus{\pi(i)}.$ Since $r_i \leq r_n$ and by the sorting criterion we also have $i \prec n$, which contradicts that $\pi$ is earliest-due-date.
\end{proof}

\noindent Notably, the same result holds true if we work with a subset of the agents $N' \subsetneq N$ and $n$ is replaced with the agent $a \in N'$ with maximum $r_a.$

Armed as such, to build intuition, Lemma \ref{lemma:recursive-decomposition} tells us that no two agents $a_{-} \in N_{-}$ and $a_{+} \in N_{+}$ can go into the same group, so a first attempt at a recursive algorithm looking for a wonderful partition of $N,$  to be later optimized by memorization/dynamic programming, would proceed as follows: start with $N' = N$; at each step, identify the agent $a \in N'$ maximizing $r_a$ and exhaust over all possibilities for $\modulus{\pi(a)}$; for each one, write $N' = N'_{-} \cup N'_{+} \cup \{a\}$ as defined in Lemma \ref{lemma:recursive-decomposition} and recurse with $N'_{-}$ and $N'_{+}$. We will specify the details of the process such that, if the recursive calls yield wonderful partitions of $N'_{-}$ and $N'_{+}$, a wonderful partition of $N$ can also be constructed by incorporating agent $a$ into them.
Of course, the last step of reasoning is incorrect, since one needs to make sure that a group of size $\modulus{\pi(a)}$ with one space available indeed exists, and this information needs to be somehow propagated across recursive calls. Moreover, it is unclear what the number of sets $N'$ reachable by the recursion is. This has to be polynomially bounded so that memorization/dynamic programming leads to a polynomial-time algorithm.

Let us first address the second issue outlined above, namely the size of the state space. This turns out to be relatively simple:~define $N(x_1, x_2, i) = \set{j \in N \mid j \leq i \text{ and } \ell_j \in [x_1, x_2]}.$ Then, we can adapt the recursive approach:~instead of storing $N'$ explicitly, start with $x_1 = 1$ and $x_2 = n$, as well as $i = n,$ from which implicitly $N' = N(x_1, x_2, i)$. At each step we will first check whether $i$, which by the sorting criterion is the largest agent maximizing $r_i$, is in $N' = N(x_1, x_2, i)$; i.e., whether $\ell_i \in [x_1, x_2].$ If not, then $N(x_1, x_2, i)=N(x_1,x_2,i-1)$ and we simply recurse with the same $x_1, x_2$ and $i' = i - 1.$ Otherwise, $i$ is the largest agent in $N'$ maximizing $r_i$, so we can exhaust as before over all possible values for $\modulus{\pi(i)}$ and perform two recursive calls in each case:~one with $(x_1', x_2', i') = (x_1, \modulus{\pi(i)}, i - 1)$ and one with $(x_1', x_2', i') = (\modulus{\pi(i)} + 1, x_2, i - 1)$. We have yet to solve the correctness issue, but we have made progress:~the state space now consists of triples $(x_1, x_2, i)$ such that $1 \leq x_1 \leq x_2 \leq n$ and $1 \leq i \leq n,$ which there are $O(n^3)$ of. 

Let us now turn our attention to ensuring correctness. To do so, we need to investigate more closely how to ensure that the group of size $\modulus{\pi(i)}$ of agent $i$ (recall that we only exhaust over its size, not over which agents are in it) exists and is used to its full capacity in the solution constructed by the recursion. There is one crucial guarantee given by Lemma \ref{lemma:recursive-decomposition} that we have so far not exploited. Namely, the recursive call with $(x_1, \modulus{\pi(i)}, i - 1)$ should only use groups of sizes $s \in [x_1, \modulus{\pi(i)}]$ and the one with $(x_1', x_2', i') = (\modulus{\pi(i)} + 1, x_2, i - 1)$ should only use groups of sizes $s \in [\modulus{\pi(i)} + 1, x_2].$ In general, state $(x_1, x_2, i)$ should only consider groups of sizes $s \in [x_1, x_2].$ This can be ensured by requiring that the exhausted value for $\modulus{\pi(i)}$ additionally satisfies $x_1 \leq \modulus{\pi(i)} \leq x_2.$ This small, seemingly inconsequential, refinement of the approach will ultimately allow us to propagate information about incomplete groups across states in the recursion. Before describing how this is done in general, to build more intuition, let us consider the first level of the recursion, namely $(x_1, x_2, i) = (1, n, n).$ At this level, the algorithm exhausts over the possible values for $\modulus{\pi(n)} \in [\ell_n, r_n].$ For each possibility, in the ensuing recursive call with $(x_1', x_2', i') = (\modulus{\pi(n)} + 1, n, n - 1)$ all used groups will be of size at least $\modulus{\pi(n)} + 1 > \modulus{\pi(n)}$ so nothing special needs to be done in this case since those agents can not be part of agent $n$'s group. However, in the other call, having $(x_1', x_2', i') = (1, \modulus{\pi(n)}, n - 1)$ some $\modulus{\pi(n)} - 1$ other agents will need to share a group of size $\modulus{\pi(n)}$ with agent $n,$ and this is not yet modeled by our approach. To model this effect across recursive calls, we introduce a fourth variable $0 \leq k < x_2$ to the state of our recursive algorithm; i.e., each state is now of the form $(x_1, x_2, i, k).$ This variable intuitively signifies that there exists (from upward in the recursion) an incomplete group, currently of size $k,$ whose final size should be $x_2.$ With this setup, the starting top-level call will now be with $(x_1, x_2, i, k) = (1, n, n, 0).$ For each value of $\modulus{\pi(n)},$ the two resultant recursive calls will be with $(x_1', x_2', i', k') = (1, \modulus{\pi(n)}, n - 1, 1)$\footnote{Strictly speaking, this should be $(x_1', x_2', i', k') = (1, \modulus{\pi(n)}, n - 1, 1 \Mod {\modulus{\pi(n)}})$. We omit this detail from the higher-level exposition to improve readability.} and $(x_1', x_2', i', k') = (\modulus{\pi(n)} + 1, n, n - 1, 0).$ Note how the former call has $k' = 1$,  signifying that we just ``opened a new (incomplete) group of size one, whose final size should be $x_2' = \modulus{\pi(n)}$''
The fundamental reason why such an approach can work is that in any node of the recursion tree, there is at most a single incomplete group to keep track of. Note that this fact crucially depends on each call $(x_1, x_2, i, k)$ only considering partitions into groups of sizes between $x_1$ and $x_2.$ We still need to describe the transitions for general calls $(x_1, x_2, i, k)$ given the newly added parameter~$k$. The formal details will following below, but in rough lines, the call for $N_{-}$ creates a new group; i.e., $k' = 1$; while the call for $N_{+}$ keeps the currently open one; i.e., $k' = k$; the exception comes when $\modulus{\pi(i)} = x_2,$ in which case the call for $N_{-}$ adds to the same group, possibly closing the group; i.e., $k' = (k + 1) \pmod{x_2}$; and no call for $N_{+}$ is generated.

\begin{theorem}  \label{thm:hiking-is-poly-time} Deciding whether a wonderful partition exists and computing one if so can be achieved in $O(n^5)$ time.
\end{theorem}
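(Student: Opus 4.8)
The plan is to formalize the recursive decomposition sketched above into a dynamic programming table indexed by states $(x_1, x_2, i, k)$, and argue its correctness and running time. First I would set up the semantics of a state precisely. For $1 \leq x_1 \leq x_2 \leq n$, $0 \leq i \leq n$, and $0 \leq k < x_2$, say that state $(x_1, x_2, i, k)$ is \emph{feasible} if the agent set $N(x_1, x_2, i) = \set{j \in N \mid j \leq i \text{ and } \ell_j \in [x_1, x_2]}$ together with $k$ ``phantom'' agents (placeholders for agents committed higher up in the recursion) admits a wonderful partition into groups whose sizes all lie in $[x_1, x_2]$, subject to the constraint that the $k$ phantom agents lie in a common group of size exactly $x_2$ (when $k = 0$ no such constraint is imposed, except that phantom-free; so the table really tracks whether $N(x_1,x_2,i)$ plus a designated incomplete group of size $x_2$ with $k$ slots already used can be completed). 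The answer to \wonderfulIntervals\ is then ``yes'' iff $(1, n, n, 0)$ is feasible.

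Next I would write down the transitions. If $i = 0$: the state is feasible iff $k = 0$ (no agents, no open group left incomplete). If $i \geq 1$ but $\ell_i \notin [x_1, x_2]$, then $N(x_1,x_2,i) = N(x_1,x_2,i-1)$, so feasibility of $(x_1,x_2,i,k)$ equals feasibility of $(x_1,x_2,i-1,k)$. Otherwise $i \in N(x_1,x_2,i)$ and, by \Cref{lemma:recursive-decomposition} applied to the subset $N(x_1,x_2,i)$ with $a = i$ (which by the $\prec$-sorting maximizes $r_a$ among agents $\le i$, hence among this subset), any earliest-due-date wonderful partition splits along $s := \modulus{\pi(i)}$, which we may additionally require to satisfy $x_1 \le s \le x_2$ (groups outside $[x_1,x_2]$ are never used in this state) and $s \in [\ell_i, r_i]$. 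For each such $s$ we branch:
\begin{itemize}
\item if $s < x_2$: recurse on $(x_1, s, i-1, (k'+1) \bmod s)$ \emph{wait} — here the incomplete group being tracked has target size $x_2 \neq s$, so it must live entirely in the $N_{+}$ side; thus the $N_{-}$ call opens a \emph{new} group containing $i$, giving $(x_1, s, i-1, 1 \bmod s)$, and the $N_{+}$ call inherits the old incomplete group, giving $(s+1, x_2, i-1, k)$; the state is feasible iff both are;
\item if $s = x_2$: then $i$ joins the tracked incomplete group, which now has $k+1$ used slots; this is consistent only if $k + 1 \leq x_2$; the $N_{+}$ side is empty (no agent can have $\ell_j > x_2 = s$ while still being $\le i$ and having $\ell_j \le x_2$), so the state is feasible iff $(x_1, x_2, i-1, (k+1) \bmod x_2)$ is feasible.
\end{itemize}
(The case $x_1 = x_2$ with $s = x_1 = x_2$ is subsumed. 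The top-level call $(1,n,n,0)$ has $k = 0$, so ``$1 \bmod s$'' is just $1$ unless $s = 1$.)

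Then I would prove correctness by induction on $i$ (with a clean statement of what ``feasible'' means in terms of existence of an earliest-due-date wonderful partition of $N(x_1,x_2,i)$ plus a size-$x_2$ group with $k$ pre-filled slots, using only group sizes in $[x_1,x_2]$): the forward direction uses \Cref{lemma:recursive-decomposition} to show any such partition decomposes into witnesses for the recursive states; the backward direction glues recursive witnesses, using that the incomplete group tracked by $k$ is literally the same physical group across the $N_+$ branches and gets closed off exactly when some $s = x_2$ contributes its last slot. \Cref{lem:lemmaone} (the earliest-due-date lemma, proven above) is what lets us restrict to earliest-due-date partitions in the first place. Finally, the running time: there are $O(n^3)$ reachable state \emph{shapes} $(x_1,x_2,i)$ and for each the parameter $k$ ranges over $O(n)$ values, so $O(n^4)$ states; each transition exhausts over $O(n)$ choices of $s$ and does $O(1)$ table lookups, giving $O(n^5)$ total. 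Reconstructing an actual wonderful partition is done by storing, at each feasible state, a witnessing choice of $s$ and back-pointers, then unrolling in $O(n^2)$ time; or by a standard second pass, without affecting the $O(n^5)$ bound.

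I expect the main obstacle to be the bookkeeping in the correctness proof — specifically, pinning down the exact invariant for $k$ so that ``there is at most one incomplete group per recursion node'' is actually maintained, and handling the modular arithmetic at the boundary $k = x_2 - 1 \to 0$ (a group closing) uniformly with the generic case. The geometric/combinatorial content is entirely in \Cref{lemma:recursive-decomposition}; everything after it is careful but routine dynamic-programming hygiene.
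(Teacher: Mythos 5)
Your proposal is correct and follows essentially the same route as the paper: the identical state $(x_1,x_2,i,k)$ with the same semantics, the same base cases and transitions (including the $1 \bmod s$ and $(k+1) \bmod x_2$ details and the absence of an $N_+$ call when $s = x_2$), and the same $O(n^4)$-states-times-$O(n)$-transition count giving $O(n^5)$. No substantive differences to report.
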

\begin{proof} We show how to solve the decision version. Constructing a wonderful partition for yes-instances can be subsequently achieved by standard techniques. We proceed by dynamic programming (DP). Define the Boolean DP array $\mathit{dp}[x_1, x_2, i, k]$ for $1 \leq x_1 \leq x_2 \leq n,$ $0 \leq i \leq n$ and $0 \leq k < x_2,$ with the meaning $\mathit{dp}[x_1, x_2, i, k] = 1$ if and only if there exists a wonderful partition of agents in $N(x_1, x_2, i)$ using only groups of sizes in $[x_1, x_2]$ and assuming that we start with an incomplete group of current size $k$ which has to have final size $x_2.$ Naturally, if $k = 0$ this means starting with no partially filled group. The final answer will be available at the end in $\mathit{dp}[1, n, n, 0].$ To compute the DP table, we use the following recurrence relations and base cases (using a hyphen as stand-in for any group size):

\begin{enumerate}
    \itemsep0em 
    \item $\mathit{dp}[-, -, 0, 0] = 1$;
    \item $\mathit{dp}[-, -, 0, k] = 0$ if $k \neq 0$;
    \item $\mathit{dp}[x_1, x_2, i, k] = \mathit{dp}[x_1, x_2, i - 1, k]$ if $\ell_i \notin [x_1, x_2]$;
    \item \label{state_type_4} $\mathit{dp}[x_1, x_2, i, k] = \bigvee_{x = \ell_i}^{\min(x_2, r_i)}F_x$ if $\ell_i \in [x_1, x_2],$ where 
    \[ 
F_x := \left\{
\begin{array}{ll}
      \mathit{dp}[x_1, x_2, i - 1, (k + 1) \Mod{x_2}] & x = x_2 \\
      \mathit{dp}[x_1, x, i - 1, 1 \Mod{x}] \land \mathit{dp}[x + 1, x_2, i - 1, k] & x < x_2. \\
\end{array} 
\right. 
\]
\end{enumerate}
The first three cases are immediate from the definition of the DP. For the last case, we iterate over $x \in [\ell_i, \max(x_2, r_i)]$ which is agent $i$'s group size. There are two cases:~if $x = x_2,$ then we put $i$ into the group of current size $k$ and final size $x_2$ that we assumed to have at our disposal; this increases the size of the group by one, or completes it if $k = x_2 - 1$; otherwise, $x < x_2,$ and we recurse into partitioning agents in $N(x_1, x, i - 1)$ into groups of sizes between $x_1$ and $x,$ and $N(x + 1, x_2, i - 1)$ into groups of sizes between $x + 1$ and $x_2.$ For the first call, this generates a new group of size $1$ (unless $x = 1$), while for the latter, this keeps the previously open group of current size $k$ and final size $x_2$ that we assumed to have.

To compute the DP table in an acyclic fashion, it suffices to iterate through $i$ in ascending order. The complexity of the approach is $O(n^5)$ because there are $O(n^4)$ states and computing the value for states of type~(\ref{state_type_4}) requires iterating through $O(n)$ values of $x.$
\end{proof}

\subsection{Extensions}

Using a similar DP approach, we can solve the following natural extension.

\begin{tcolorbox}
\textbf{\fauxsc{Hiking-Min-Delete}}\\
\textbf{Input}: A set $N$ of agents and for each agent $i \in N$ two numbers $\ell_i \leq r_i$ such that $S_i = \set{\ell_i, \ldots, r_i}.$ \\
\textbf{Problem}: Compute a set $N' \subseteq N$ of minimum size such that $N \setminus N'$ has a wonderful partition. Output $N'$ and a wonderful partition of $N \setminus N'.$
\end{tcolorbox}

\noindent The approach relies on essentially the same recursive reasoning as before, except that we now also consider the possibility of ``ignoring'' an agent $i$ and hence recursing with $(x_1', x_2', i', k') = (x_1, x_2, i - 1, k).$ Moreover, instead of making the recursion return whether a wonderful partition is possible or not, we make it return the minimum number of agents that need to be removed so that this is possible. In light of this, the ``ignore $i$'' recursive call incurs a cost of 1 removed agent. The details are formalized in the following.

\begin{theorem}\label{th:hiking-min-delete-poly-time} \fauxsc{Hiking-Min-Delete} is solvable in $O(n^5)$ time.
\end{theorem}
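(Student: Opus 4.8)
The plan is to reuse the dynamic programming skeleton from \Cref{thm:hiking-is-poly-time} almost verbatim, changing only the semantics of the DP array from a Boolean to an integer and adding one transition. Concretely, I would define $\mathit{dp}[x_1, x_2, i, k]$ to be the \emph{minimum number of agents from $N(x_1, x_2, i)$ that must be deleted} so that the remaining agents in $N(x_1, x_2, i)$ admit a wonderful partition using only group sizes in $[x_1, x_2]$, assuming we start with an incomplete group of current size $k$ whose final size must be $x_2$ (with $+\infty$ or $n+1$ used as a sentinel for infeasibility). The base cases become $\mathit{dp}[-,-,0,0] = 0$ and $\mathit{dp}[-,-,0,k] = +\infty$ for $k \neq 0$, since with no agents left we can delete nothing, and an unfinished group of size $k \neq 0$ can never be completed. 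The passthrough case $\ell_i \notin [x_1,x_2]$ stays as $\mathit{dp}[x_1,x_2,i,k] = \mathit{dp}[x_1,x_2,i-1,k]$. The interesting case $\ell_i \in [x_1,x_2]$ becomes a minimum rather than an OR: we take the min over $x \in [\ell_i, \min(x_2,r_i)]$ of $F_x$, where $F_x$ is defined exactly as before but with $\vee$ replaced by $+$ (so for $x < x_2$, $F_x = \mathit{dp}[x_1, x, i-1, 1 \Mod x] + \mathit{dp}[x+1, x_2, i-1, k]$, and for $x = x_2$, $F_x = \mathit{dp}[x_1, x_2, i-1, (k+1)\Mod{x_2}]$), and we additionally min against the new ``delete agent $i$'' option, $1 + \mathit{dp}[x_1, x_2, i-1, k]$. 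The final answer is $\mathit{dp}[1,n,n,0]$, and $N'$ together with a wonderful partition of $N \setminus N'$ is recovered by the standard backtracking-through-the-DP technique, remembering at each state which option achieved the minimum.

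For correctness I would mirror the argument underlying \Cref{thm:hiking-is-poly-time}: by \Cref{lemma:recursive-decomposition} (and its noted extension to subsets $N' \subsetneq N$ with $n$ replaced by the $\prec$-maximal agent), once we fix the group size $x = \modulus{\pi(i)}$ of the $\prec$-largest agent $i$ in the current subproblem, the remaining agents split cleanly into those with $\ell_j \le x$ (which go into groups of size $\le x$, handled by the $(x_1, x, i-1, \cdot)$ call, with agent $i$'s own group accounted for either via the $k$-increment when $x = x_2$ or via a freshly opened group of size $1$ when $x < x_2$) and those with $\ell_j > x$ (groups of size $> x$, handled by $(x+1, x_2, i-1, k)$). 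The one genuinely new ingredient is the deletion move: deleting agent $i$ removes exactly one agent and reduces to the same subproblem on $N(x_1,x_2,i-1)$ with the open group untouched, so the recurrence is $1 + \mathit{dp}[x_1,x_2,i-1,k]$; conversely, any optimal deletion set either deletes agent $i$ — matching this branch — or keeps it in some group, whose size is then the $x$ we exhaust over. I would also re-invoke \Cref{lemmaone} at the top level, observing that if $N \setminus N'$ admits a wonderful partition then it admits an earliest-due-date one, so restricting attention to earliest-due-date partitions of the retained agents loses nothing.

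For the running time, the analysis is identical to before: there are $O(n^4)$ states $(x_1,x_2,i,k)$, each of type~(\ref{state_type_4}) costs $O(n)$ to evaluate (iterating over $x$, plus $O(1)$ for the deletion option), and the table is filled by iterating $i$ in ascending order, giving $O(n^5)$ total. I expect the main obstacle to be purely expository rather than mathematical: one must carefully argue that the deletion move does not interact badly with the invariant that each recursion node tracks at most one incomplete group — but since deleting $i$ changes neither $x_1$, $x_2$, nor $k$, the incomplete-group bookkeeping is inherited unchanged, so this is routine. A minor subtlety worth a sentence is handling the arithmetic with $+\infty$ (or the $n+1$ sentinel) so that infeasible subproblems correctly propagate; capping values at $n+1$ keeps everything in range and preserves the complexity bound.
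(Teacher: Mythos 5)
Your proposal matches the paper's proof essentially verbatim: the same integer-valued reinterpretation of the DP table $\mathit{dp}[x_1,x_2,i,k]$, the same base cases with $\infty$ for infeasibility, the same replacement of $\lor$ by $\min$ and $\land$ by $+$ in $F_x$, and the same additional ``delete agent $i$'' transition $1+\mathit{dp}[x_1,x_2,i-1,k]$, yielding the identical $O(n^5)$ bound. The correctness discussion via \Cref{lemma:recursive-decomposition} and \Cref{lemmaone} is likewise the same argument the paper relies on, so there is nothing to add.
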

\begin{proof} We use a similar approach as in the proof of Theorem \ref{thm:hiking-is-poly-time}. As before, it suffices to show how to compute the minimum size of $N',$ as computing such a set $N'$ and a corresponding wonderful partition for $N \setminus N'$ follow by standard techniques. Define the integer-valued DP array $\mathit{dp}[x_1, x_2, i, k]$ for $1 \leq x_1 \leq x_2 \leq n,$ $0 \leq i \leq n$ and $0 \leq k < x_2,$ with the meaning that $\mathit{dp}[x_1, x_2, i, k]$ contains the minimum number of agents which need to be removed from $N(x_1, x_2, i)$ so that the remaining agents admit a wonderful partition into groups of sizes in $[x_1, x_2]$ assuming that we start with an incomplete group of current size $k$ which has to have final size $x_2.$ Note that, in contrast to the previous DP, the value 0 corresponds to removing no agents, which is the best possible outcome, while previously it corresponded to the need for removing at least one agent.
The final answer will be available at the end in $\mathit{dp}[1, n, n, 0].$ To compute the DP table, we use the following recurrence relations and base cases:

\begin{enumerate}
    \item $\mathit{dp}[-, -, 0, 0] = 0$;
    \item $\mathit{dp}[-, -, 0, k] = \infty$ if $k \neq 0$;
    \item $\mathit{dp}[x_1, x_2, i, k] = \mathit{dp}[x_1, x_2, i - 1, k]$ if $\ell_i \notin [x_1, x_2]$;
    \item $\mathit{dp}[x_1, x_2, i, k] = \min\{1 + \mathit{dp}[x_1, x_2, i - 1, k], X\}$ if $\ell_i \in [x_1, x_2]$, where \[X := \min\{F_x \mid \ell_i \leq x \leq \min(x_2, r_i)\}\] and  
    \[ 
F_x := \left\{
\begin{array}{ll}
      \mathit{dp}[x_1, x_2, i - 1, (k + 1) \Mod{x_2}] & x = x_2 \\
      \mathit{dp}[x_1, x, i - 1, 1 \Mod{x}] + \mathit{dp}[x + 1, x_2, i - 1, k] & x < x_2. \\
\end{array} 
\right. 
\]
\end{enumerate}

\noindent The reasoning stays largely the same as before, with the only difference being the new $1 + \mathit{dp}[x_1, x_2, i - 1, k]$ term, which accounts for discarding agent $i$ and incurring a cost of 1 corresponding to removing an agent. 
\end{proof}

\noindent Another subtly distinct natural variant of Woeginger's Hiking Problem is the following:

\begin{tcolorbox}
\textbf{\fauxsc{Hiking-Max-Satisfied}}\\
\textbf{Input}: A set $N$ of agents and for each agent $i \in N$ two numbers $\ell_i \leq r_i$ such that $S_i = \set{\ell_i, \ldots, r_i}.$ \\
\textbf{Problem}: Compute a partition $\pi$ of $N$ maximizing the number of agents approving of their coalition sizes.
\end{tcolorbox}

\noindent Indeed, \fauxsc{Hiking-Max-Satisfied} and \fauxsc{Hiking-Min-Delete} are related, in that if $N'$ is a minimum-size set of agents such that $N \setminus N'$ has a wonderful partition, then it is also possible to satisfy at least $n - \modulus{N'}$ agents in a partition of $N.$ This is because agents in $N'$ can be put together in a group in tandem to a wonderful partition of $N \setminus N'$ to get a partition of $N$ with at least $n - \modulus{N'}$ satisfied agents. However, it might be possible to satisfy more than $n - \modulus{N'}$ agents if unsatisfied agents do not all go into the same group.

Hence, solving this variant of the problem introduces new challenges that require further insight. Let us fix a number $k$ and ask whether it is possible to satisfy at least $\modulus{N} - k$ agents. If the answer is affirmative, we also want a partition achieving this. To solve \fauxsc{Hiking-Max-Satisfied}, we will binary search for the smallest $0 \leq k \leq \modulus{N}$ for which the answer is affirmative, call it $k^*$, and then recover a partition for $k^*.$ It remains to show how to solve the problem for a fixed value of $k.$  To do so, we need to adjust the angle from which we look at the problem. In particular, instead of looking for a partition satisfying at least $k$ agents, we will look for a size-$k$ subset $N' \subseteq N$ (corresponding to $k$ agents which we do not require to be satisfied) such that $(N \setminus N') \cup D_{k}$ admits a wonderful partition, where $D_{k}$ is a set of $k$ dummy agents happy with any group size. Intuitively, $k$ agents are replaced with dummies not minding their group size. Because it is always no worse to remove agents from $N$ in contrast to removing agents from $D_k,$ it is enough to ask to remove exactly $k$ agents from $N \cup D_k$ such that the remaining agents admit a wonderful partition. Checking whether this is possible and finding a corresponding wonderful partition reduces to the following more general problem, which we will show can be solved in polynomial time.

\begin{tcolorbox}
\textbf{\fauxsc{Hiking-x-Delete}}\\
\textbf{Input}: A set $N$ of agents, for each agent $i \in N$ two numbers $\ell_i \leq r_i$ such that $S_i = \set{\ell_i, \ldots, r_i},$ and also a number $0 \leq x \leq \modulus{N}.$ \\
\textbf{Problem}: Compute a set $N' \subseteq N$ of size $x$ such that $N \setminus N'$ has a wonderful partition (or report impossibility). Output $N'$ and a wonderful partition of $N \setminus N'.$
\end{tcolorbox}

We now show how to solve \fauxsc{Hiking-x-Delete} in polynomial time. We can once again try to attack the problem recursively with our usual state $(x_1, x_2, i, k).$ Just like we did for \fauxsc{Hiking-Min-Delete}, we will have recursive calls for ignoring an agent; i.e., adding it to the set $N'$ of removed agents.\footnote{Note that previously we used $N'$ to denote $N(x_1, x_2, i)$ when discussing the recursive algorithm for the original hiking problem. We do not keep this notation here.} In order to ensure that exactly $x$ agents are removed, we add another variable $r$ to the state of the recursion: $(x_1, x_2, i, k, r),$ where $r$ is how many agents we want to remove. The top-level call will be invoked with $r = x.$ The recursive approach for \fauxsc{Hiking-Min-Delete} now translates relatively swiftly to the new setting. The main difference is the case where a state of the form $(x_1, x_2, i, k, -)$ with $\ell_i \in [x_1, x_2]$ performs two recursive calls to $(x_1, x, i - 1, 1, -)$ and $(x + 1, x_2, i - 1, k, -).$ In particular, say the state is $(x_1, x_2, i, k, r),$ then what should be the values $r'$ and $r''$ for the two recursive calls? Intuitively, there is no fixed answer, since it could be that we remove more agents in the first or in the second call. In fact, we have full freedom over how to split the $r$ removals across the two calls as long as $r' + r'' = r.$ Hence, we will iterate over all options $0 \leq r' \leq r$ and set $r'' = r - r',$ and in each case call recursively with $(x_1, x, i - 1, 1, r')$ and $(x + 1, x_2, i - 1, k, r'').$

\begin{theorem}\label{lemma:hiking-x-delete-poly} \fauxsc{Hiking-x-Delete} is solvable in $O(n^7)$ time.
\end{theorem}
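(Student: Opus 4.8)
The plan is to take the dynamic program from the proof of \cref{th:hiking-min-delete-poly-time} and add one further coordinate $r$ to the state, recording how many agents are still to be deleted, exactly as sketched in the discussion preceding the theorem. Concretely, I would define a Boolean array $\mathit{dp}[x_1, x_2, i, k, r]$ ranging over $1 \le x_1 \le x_2 \le n$, $0 \le i \le n$, $0 \le k < x_2$ and $0 \le r \le n$, with the meaning that $\mathit{dp}[x_1, x_2, i, k, r] = 1$ if and only if one can delete exactly $r$ agents from $N(x_1, x_2, i)$ so that the remaining agents admit a wonderful partition using only groups of sizes in $[x_1, x_2]$, under the assumption that we start with a single incomplete group of current size $k$ whose final size must be $x_2$. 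The answer to \fauxsc{Hiking-x-Delete} is then read off from $\mathit{dp}[1, n, n, 0, x]$, and the set $N'$ together with a wonderful partition of $N \setminus N'$ is recovered by the usual backtracking through the table.

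For the base cases I would set $\mathit{dp}[-, -, 0, 0, 0] = 1$ and $\mathit{dp}[-, -, 0, k, r] = 0$ whenever $k \ne 0$ or $r \ne 0$ (with no agents left one can neither fill a pending group nor perform a deletion). For the transitions, if $\ell_i \notin [x_1, x_2]$ then $i \notin N(x_1, x_2, i)$ and we copy $\mathit{dp}[x_1, x_2, i, k, r] = \mathit{dp}[x_1, x_2, i - 1, k, r]$. If $\ell_i \in [x_1, x_2]$, then $i$ is the $\prec$-maximal agent of $N(x_1, x_2, i)$ and $\mathit{dp}[x_1, x_2, i, k, r]$ is the disjunction of: (i) deleting $i$, which contributes $\mathit{dp}[x_1, x_2, i - 1, k, r - 1]$ and is available only when $r \ge 1$; and (ii) for each group size $x$ with $\ell_i \le x \le \min(x_2, r_i)$, placing $i$ into a group of final size $x$, which for $x = x_2$ uses the pending group and contributes $\mathit{dp}[x_1, x_2, i - 1, (k + 1) \Mod{x_2}, r]$, while for $x < x_2$ it opens a fresh group of target size $x$ and splits the remaining deletions arbitrarily, contributing $\bigvee_{r' + r'' = r}\bigl(\mathit{dp}[x_1, x, i - 1, 1 \Mod{x}, r'] \land \mathit{dp}[x + 1, x_2, i - 1, k, r'']\bigr)$.

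Correctness would follow along the same lines as \cref{thm:hiking-is-poly-time} and \cref{th:hiking-min-delete-poly-time}: applying \cref{lemma:recursive-decomposition} to the $\prec$-maximal surviving agent shows that an earliest-due-date wonderful partition of the surviving agents splits cleanly at $\modulus{\pi(i)}$ into a part using groups of sizes $\le \modulus{\pi(i)}$ and a part using larger groups, and the incomplete-group accounting remains sound because at each recursion node there is at most one pending group. The genuinely new ingredient --- and the step I expect to require the most care --- is justifying that ranging over all splits $r = r' + r''$ in case (ii) with $x < x_2$ is neither too permissive nor too restrictive: any global selection of $r$ deleted agents restricts to some $r'$ deletions among agents that end up in groups of sizes in $[x_1, x]$ and $r'' = r - r'$ deletions among those in groups of sizes in $[x + 1, x_2]$, so it is reconstructed by the corresponding split; conversely, two sub-solutions for any split $(r', r'')$ glue into a solution deleting exactly $r' + r'' = r$ agents, because the pools $N(x_1, x, i - 1)$ and $N(x + 1, x_2, i - 1)$ are disjoint and occupy disjoint ranges of group sizes. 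Finally, for the running time: there are $O(n^5)$ states, and evaluating a state with $\ell_i \in [x_1, x_2]$ costs $O(n)$ for the choices of $x$ times $O(n)$ for the choices of $r'$, i.e.\ $O(n^2)$ per state; filling the table in increasing order of $i$ makes the computation acyclic, giving $O(n^7)$ overall, as claimed.
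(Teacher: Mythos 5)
Your proposal matches the paper's proof essentially verbatim: the same five-coordinate Boolean DP $\mathit{dp}[x_1,x_2,i,k,r]$, the same base cases and transitions (including the guarded deletion term for $r\ge 1$ and the disjunction over splits $r=r'+r''$ when $x<x_2$), and the same $O(n^5)$-states-times-$O(n^2)$-per-state accounting. The only cosmetic difference is that you let $r$ range up to $n$ rather than up to $x$, which does not affect the asymptotics.
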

\begin{proof} As before, it suffices to check feasibility, a solution can then be recovered using standard techniques. Define the Boolean DP array $\mathit{dp}[x_1, x_2, i, k, r]$ for $1 \leq x_1 \leq x_2 \leq n,$ $0 \leq i \leq n,$ $0 \leq k < x_2,$ and $0 \leq r \leq x$ with the meaning that $\mathit{dp}[x_1, x_2, i, k, r] = 1$ if and only if there exist $r$ agents that can be removed from $N(x_1, x_2, i)$ such that the remaining agents admit a wonderful partition into groups of sizes in $[x_1, x_2]$ assuming that we start with an incomplete group of current size $k$ which has to have final size $x_2.$ At the end, $\mathit{dp}[1, n, n, 0, x]$ will be 1 if and only if it is possible to remove $x$ agents from $N$ such that the remaining agents admit a wonderful partition. To compute the DP table, we use the following:

\begin{enumerate}
    \item \label{dp3_state_type_1} $\mathit{dp}[-, -, 0, 0, 0] = 1$;
    \item \label{dp3_state_type_2} $\mathit{dp}[-, -, 0, k, r] = 0$ if $k \neq 0$ or $r \neq 0$;
    \item \label{dp3_state_type_3} $\mathit{dp}[x_1, x_2, i, k, r] = \mathit{dp}[x_1, x_2, i - 1, k, r]$ if $\ell_i \notin [x_1, x_2]$;
    \item \label{dp3_state_type_4} $\mathit{dp}[x_1, x_2, i, k, r] = \mathit{dp}[x_1, x_2, i - 1, k, r - 1, d] \lor X$\footnote{Only $X$ for $r = 0$ as otherwise we would be referring to the invalid value $r = -1.$} if $\ell_i \in [x_1, x_2]$, where $X := \bigvee_{x = \ell_i}^{\min(x_2, r_i)}F_x$ and  
    \[ 
F_x := \left\{
\begin{array}{ll}
      \mathit{dp}[x_1, x_2, i - 1, (k + 1) \Mod{x_2}, r] & x = x_2 \\
      \bigvee_{r' = 0}^{r}\big(\mathit{dp}[x_1, x, i - 1, 1 \Mod{x}, r']
      \land \mathit{dp}[x + 1, x_2, i - 1, k, r - r']\big) & x < x_2. \\
\end{array} 
\right. 
\]
\end{enumerate}

We explain the four cases separately: for cases~(\ref{dp3_state_type_1}) and (\ref{dp3_state_type_2}) we have $i = 0$; i.e., the set of yet-to-be-considered agents is empty, so $k = r = 0$ is necessary and sufficient for the table to store a 1, signifying feasibility. As before, case~(\ref{dp3_state_type_3}) straightforwardly ignores an agent that is not part of the current set of active agents.

The more interesting case is case~(\ref{dp3_state_type_4}). First, the $\mathit{dp}[x_1, x_2, i - 1, k, r - 1]$ term corresponds to removing an agent, and it only applies to $r \geq 1$, as in the footnote. This decreases $r$ by 1 since we have just removed an agent. The $X := \bigvee_{x = \ell_i}^{\min(x_2, r_i)}F_x$ term is more involved. The selection of $\ell_i \leq x \leq \min(x_2, r_i)$ corresponds to selecting the size of the group that agent $i$ will be part of. For $x = x_2,$ the analysis is similar to our previous DPs. For $x < x_2,$ on the other hand, we moreover split the $r$ agent removals into $r'$ removals for the first recursive call and $r'' = r - r'$ for the second. Whether both calls are successful is represented by the expression $\mathit{dp}[x_1, x, i - 1, 1 \Mod{x}, r'] \land \mathit{dp}[x + 1, x_2, i - 1, k, r''].$

As before, to compute the DP table in an acyclic fashion, it suffices to iterate through $i$ in ascending order. The complexity is $O(n^7)$ because there are $O(n^5)$ states and computing the value for states of type (\ref{dp3_state_type_4}) requires iterating through $O(n^2)$ values for $(x, r).$    
\end{proof}

\noindent Using the above binary search approach we get a solution for \fauxsc{Hiking-Max-Satisfied} that runs only a $O(\log n)$ factor slower than the runtime for \fauxsc{Hiking-x-Delete}.

\begin{theorem}\label{th:hiking-max-satisfied-poly} \fauxsc{Hiking-Max-Satisfied} is solvable in $O(n^7 \log n)$ time.
\end{theorem}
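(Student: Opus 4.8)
The plan is to make rigorous the binary-search reduction outlined just before the theorem: reduce \maxSat\ to $O(\log n)$ feasibility queries of \fauxsc{Hiking-x-Delete}, each on an instance of size $O(n)$, and then apply \Cref{lemma:hiking-x-delete-poly}.

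The key lemma I would prove is that, for every fixed $0 \le k \le n$, a partition of $N$ satisfying at least $n-k$ agents exists if and only if \fauxsc{Hiking-x-Delete} is feasible on the instance whose agent set is $N \cup D_k$ --- where the agents of $N$ keep their intervals and each dummy $d \in D_k$ gets $\ell_d = 1$, $r_d = n+k$ --- with parameter $x = k$. For the forward implication, starting from a partition $\pi$ of $N$ in which a set $T$ of satisfied agents has $|T| \ge n-k$, I discard surplus agents from $T$ so that $|T| = n-k$ exactly, set $N' := N \setminus T$ (so $|N'| = k$), and replace in $\pi$ every agent of $N'$ by a distinct dummy of $D_k$; since all group sizes are unchanged, every agent of $T$ is still in an approved-size group and every dummy is trivially content, so this is a wonderful partition of $(N \cup D_k) \setminus N'$, witnessing feasibility. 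For the converse, given a wonderful partition $\sigma$ of $(N \cup D_k) \setminus R$ with $|R| = k$, I first argue we may assume $R \subseteq N$: as long as $R$ contains a dummy $d$, we have $|R \cap N| \le k-1$ and hence $|N \setminus R| \ge n-k+1 \ge 1$, so we may pick an agent $a \in N \setminus R$ occurring in $\sigma$, delete $a$ from its group in $\sigma$ and insert $d$ in its place (group sizes and wonderfulness are preserved), and replace $R$ by $(R \setminus \{d\}) \cup \{a\}$; iterating drives all dummies out of $R$. Once $R \subseteq N$, putting the $k$ agents of $R$ back into the positions vacated by the $k$ dummies yields a partition of $N$ in which the $n-k$ agents of $N \setminus R$ all lie in approved-size groups.

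Given this lemma, I would note that the predicate ``at least $n-k$ agents of $N$ can be satisfied'' is monotone non-decreasing in $k$ and holds for $k = n$ (no agent need be satisfied), so there is a least $k^\star \in \{0,\dots,n\}$ for which it holds; binary search locates $k^\star$ using $O(\log n)$ feasibility queries. Each query is an instance of \fauxsc{Hiking-x-Delete} with $n + k \le 2n$ agents, hence solvable in $O((2n)^7) = O(n^7)$ time by \Cref{lemma:hiking-x-delete-poly}, for a total of $O(n^7 \log n)$. One further call of \fauxsc{Hiking-x-Delete} for $k = k^\star$, followed by the dummy-swapping and dummy-replacement described above, reconstructs within the same bound a partition of $N$ satisfying exactly $n - k^\star$ agents, which is optimal since $n - (k^\star - 1)$ agents cannot be satisfied.

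I do not anticipate a real obstacle here; the one place that needs care is the equivalence lemma --- specifically the dummy-for-agent exchange argument that legitimizes assuming $R \subseteq N$, and the accompanying check that the predicate's monotonicity in $k$ is not in tension with the (non-monotone) feasibility of \fauxsc{Hiking-x-Delete} on a fixed instance, since the instance itself grows with $k$ --- together with the bookkeeping that padding $N$ by $k \le n$ dummies enlarges the instance only by a constant factor, so that each \fauxsc{Hiking-x-Delete} query indeed costs $O(n^7)$ rather than more.
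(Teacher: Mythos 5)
Your proposal is correct and follows essentially the same route as the paper: binary search over $k$, pad with $k$ dummy agents approving every size, and reduce the fixed-$k$ feasibility question to \fauxsc{Hiking-x-Delete} with $x=k$ on $N \cup D_k$, invoking \Cref{lemma:hiking-x-delete-poly} for the $O(n^7)$ per-query bound. The only difference is one of rigor: you spell out the dummy-for-agent exchange argument justifying that removals may be assumed to come from $N$ (which the paper dispatches with the sentence ``it is always no worse to remove agents from $N$ than from $D_k$'') and the monotonicity check underlying the binary search, both of which match the paper's intent.
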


\subsection{Further Weighted Extensions}
If not all agents can be satisfied,
\fauxsc{Hiking-Min-Delete} and \fauxsc{Hiking-Max-Satisfied} provide two ways of implementing a compromise. However, both treat unsatisfied/deleted agents equally. In certain settings, it might be more desirable to take into account the different \emph{entitlements} of the agents; i.e., one agent might have been dissatisfied with their group size during the previous edition of the workshop, or another agent might be the senior invited speaker. One modelling option is to assign a weight $w_i$ to each agent $i \in N$ and weigh the dissatisfied/deleted agents accordingly leading to the following variants:
\begin{tcolorbox}
\textbf{\fauxsc{Hiking-Min-Delete-Weighted}}\\
\textbf{Input}: A set $N$ of agents and for each agent $i \in N$ two numbers $\ell_i \leq r_i$ such that $S_i = \set{\ell_i, \ldots, r_i}.$ Moreover, for each agent $i \in N,$ a number $w_i \in \mathbb{R}_{\geq 0}.$ \\
\textbf{Problem}: Compute a set $N' \subseteq N$ minimizing $\sum_{i \in N'}w_i$ such that $N \setminus N'$ has a wonderful partition. Output $N'$ and a wonderful partition of $N \setminus N'.$
\end{tcolorbox}

\begin{tcolorbox}
\textbf{\fauxsc{Hiking-Max-Satisfied-Weighted}}\\
\textbf{Input}: A set $N$ of agents and for each agent $i \in N$ two numbers $\ell_i \leq r_i$ such that $S_i = \set{\ell_i, \ldots, r_i}.$ Moreover, for each agent $i \in N,$ a number $w_i \in \mathbb{R}_{\geq 0}.$ \\
\textbf{Problem}: Compute a partition $\pi$ of the agents such that if $N_\pi$ is the set of agents approving of their coalition sizes in $\pi,$ then $\sum_{i \in N_\pi}w_i$ is maximized.
\end{tcolorbox}

Our dynamic programs, with minor modifications which we sketch next, can also be used to solve the weighted variants. We begin with \fauxsc{Hiking-Min-Delete-Weighted}.

\begin{theorem} \fauxsc{Hiking-Min-Delete-Weighted} is solvable in $O(n^5)$ time.
\end{theorem}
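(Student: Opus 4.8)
The plan is to reuse, almost verbatim, the dynamic program from the proof of \Cref{th:hiking-min-delete-poly-time}, replacing the unit cost of discarding an agent by that agent's weight. As before, it suffices to compute the optimal objective value; a witnessing set $N'$ and a wonderful partition of $N \setminus N'$ are then recovered by standard back-pointer bookkeeping. Order the agents by $\prec$ as in \Cref{sec:polyDP} and, for $1 \le x_1 \le x_2 \le n$, $0 \le i \le n$, $0 \le k < x_2$, define the real-valued array $\mathit{dp}[x_1, x_2, i, k]$ to hold the minimum of $\sum_{j \in N'} w_j$ over all $N' \subseteq N(x_1, x_2, i)$ for which $N(x_1, x_2, i) \setminus N'$ admits a wonderful partition into groups of sizes in $[x_1, x_2]$, given that one starts with an incomplete group of current size $k$ whose final size must be $x_2$ (with $k = 0$ meaning no open group); the answer is $\mathit{dp}[1, n, n, 0]$, with $\infty$ indicating infeasibility.

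The recurrences are exactly those of \Cref{th:hiking-min-delete-poly-time} with the constant $1$ replaced by $w_i$: the base cases are $\mathit{dp}[-, -, 0, 0] = 0$ and $\mathit{dp}[-, -, 0, k] = \infty$ for $k \neq 0$; if $\ell_i \notin [x_1, x_2]$ then $\mathit{dp}[x_1, x_2, i, k] = \mathit{dp}[x_1, x_2, i - 1, k]$; and if $\ell_i \in [x_1, x_2]$ then
\[
\mathit{dp}[x_1, x_2, i, k] = \min\bigl\{\, w_i + \mathit{dp}[x_1, x_2, i - 1, k],\ \ \min\{\, F_x \mid \ell_i \le x \le \min(x_2, r_i)\,\} \,\bigr\},
\]
where $F_x := \mathit{dp}[x_1, x_2, i - 1, (k + 1) \Mod{x_2}]$ if $x = x_2$, and $F_x := \mathit{dp}[x_1, x, i - 1, 1 \Mod{x}] + \mathit{dp}[x + 1, x_2, i - 1, k]$ if $x < x_2$. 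Correctness follows from the same induction on $i$ as in \Cref{th:hiking-min-delete-poly-time}: in the interesting case, agent $i$ is either discarded --- contributing its weight $w_i$ --- or placed into a group of some size $x \in [\ell_i, \min(x_2, r_i)]$, with the two subcases $x = x_2$ and $x < x_2$ behaving as in all our previous DPs. Crucially, the structural facts underpinning the recursion --- the existence of an earliest-due-date wonderful partition and \Cref{lemma:recursive-decomposition} --- concern only which agents may share a coalition and are completely insensitive to the weights, so the family of partitions explored by the recursion is unchanged and only the accumulated objective differs; moreover, \Cref{lemma:recursive-decomposition} splits the active agents by the relation between $\ell_i$ and $\modulus{\pi(n)}$, so the removal weight is additive across the two recursive subproblems, which is exactly what validates the $F_x$ formula for $x < x_2$.

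For the running time, nothing changes relative to \Cref{th:hiking-min-delete-poly-time}: there are $O(n^4)$ states, and each state with $\ell_i \in [x_1, x_2]$ is evaluated in $O(n)$ time by ranging over $x$, giving $O(n^5)$ overall (working in the standard arithmetic model for the weights, with $\infty$ handled symbolically). I do not expect a genuine obstacle here; the only points that warrant a remark rather than real work are (i) that decoupling ``minimize removed weight'' from plain feasibility does not disturb the recursive decomposition, for the additivity reason just noted, and (ii) the switch from counts to real-valued weights, which merely changes the DP's codomain from $\mathbb{Z}_{\ge 0} \cup \{\infty\}$ to $\mathbb{R}_{\ge 0} \cup \{\infty\}$ without affecting either the recurrences or the complexity.
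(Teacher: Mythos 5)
Your proposal is correct and matches the paper's approach exactly: the paper's proof sketch likewise reuses the DP of \Cref{th:hiking-min-delete-poly-time} with the state meaning changed from ``minimum number of removed agents'' to ``minimum total weight of removed agents'' and the discard term $1 + \mathit{dp}[x_1, x_2, i - 1, k]$ replaced by $w_i + \mathit{dp}[x_1, x_2, i - 1, k]$. You merely spell out the recurrences and the (unchanged) correctness and complexity arguments in more detail than the paper's sketch does.
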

\begin{proof}[Proof Sketch] In the proof of \cref{th:hiking-min-delete-poly-time}, we defined the DP as follows: ``$\mathit{dp}[x_1, x_2, i, k]$ contains the minimum number of agents which need to be removed from $N(x_1, x_2, i)$ so that the remaining agents admit a wonderful partition into groups of sizes in $[x_1, x_2]$ assuming that we start with an incomplete group of current size $k$ which has to have final size $x_2$''. To handle weights, this is replaced by ``$\mathit{dp}[x_1, x_2, i, k]$ contains the minimum total weight of agents which need to be removed from $N(x_1, x_2, i)$ so that the remaining agents admit a wonderful partition into groups of sizes in $[x_1, x_2]$ assuming that we start with an incomplete group of current size $k$ which has to have final size $x_2$''. The rest of the proof stays the same, with the minor adaptation that the fourth recurrence relation accordingly becomes $\mathit{dp}[x_1, x_2, i, k] = \min\{w_i + \mathit{dp}[x_1, x_2, i - 1, k], X\},$ where previously $w_i = 1$ has been used.
\end{proof}

\noindent To get a similar result for \fauxsc{Hiking-Max-Satisfied-Weighted} following our previous proof outline, we first define a weighted analogue of \fauxsc{Hiking-x-Delete}, as follows.

\begin{tcolorbox}
\textbf{\fauxsc{Hiking-x-Delete-Min-Weight}}\\
\textbf{Input}: A set $N$ of agents, for each agent $i \in N$ two numbers $\ell_i \leq r_i$ such that $S_i = \set{\ell_i, \ldots, r_i},$ and also a number $0 \leq x \leq \modulus{N}.$ Moreover, for each agent $i \in N,$ a number $w_i \in \mathbb{R}_{\geq 0}.$ \\
\textbf{Problem}: Compute a set $N' \subseteq N$ of size $x$ minimizing $\sum_{i \in N'}w_i$ such that $N \setminus N'$ has a wonderful partition (or report impossibility). Output $N'$ and a wonderful partition of $N \setminus N'.$
\end{tcolorbox}

\begin{restatable}{theorem}{xdeleteweight}
\label{lemma:hiking-x-delete-min-weight-poly} \fauxsc{Hiking-x-Delete-Min-Weight} is solvable in $O(n^7)$ time.
\end{restatable}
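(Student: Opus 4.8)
The plan is to combine the two modifications that we have already made to the base dynamic program of \cref{thm:hiking-is-poly-time}: the extra state coordinate $r$ from \cref{lemma:hiking-x-delete-poly}, which forces the number of deleted agents to equal a prescribed value, and the replacement of the unit deletion cost by the weight $w_i$, as done for \fauxsc{Hiking-Min-Delete-Weighted}. Concretely, I would define an integer-valued (extended-real-valued) DP array $\mathit{dp}[x_1, x_2, i, k, r]$ over the same ranges $1 \le x_1 \le x_2 \le n$, $0 \le i \le n$, $0 \le k < x_2$, $0 \le r \le x$, now with the meaning that $\mathit{dp}[x_1, x_2, i, k, r]$ is the minimum total weight of a set of exactly $r$ agents whose removal from $N(x_1, x_2, i)$ leaves a set admitting a wonderful partition into groups of sizes in $[x_1, x_2]$, starting from an incomplete group of current size $k$ and final size $x_2$; the value is $+\infty$ if no such set exists. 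The final answer is $\mathit{dp}[1, n, n, 0, x]$, which is finite exactly when the instance is feasible, and then equals the optimum.

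The recurrences mirror those of \cref{lemma:hiking-x-delete-poly} with the obvious arithmetic changes. Base cases: $\mathit{dp}[-, -, 0, 0, 0] = 0$ and $\mathit{dp}[-, -, 0, k, r] = +\infty$ whenever $k \neq 0$ or $r \neq 0$. For $\ell_i \notin [x_1, x_2]$ we copy $\mathit{dp}[x_1, x_2, i, k, r] = \mathit{dp}[x_1, x_2, i - 1, k, r]$. For $\ell_i \in [x_1, x_2]$ we take the minimum of the ``delete $i$'' option $w_i + \mathit{dp}[x_1, x_2, i - 1, k, r - 1]$ (valid only for $r \ge 1$) and the ``place $i$'' option $X := \min\{F_x \mid \ell_i \le x \le \min(x_2, r_i)\}$, where $F_{x_2} := \mathit{dp}[x_1, x_2, i - 1, (k+1) \Mod{x_2}, r]$ and, for $x < x_2$,
\[
F_x := \min_{0 \le r' \le r}\bigl(\mathit{dp}[x_1, x, i - 1, 1 \Mod{x}, r'] + \mathit{dp}[x + 1, x_2, i - 1, k, r - r']\bigr).
\]
Correctness follows exactly the argument for \cref{lemma:hiking-x-delete-poly}: \cref{lemma:recursive-decomposition} guarantees that once the size of agent $i$'s group (or the decision to delete $i$) is fixed, the remaining agents split cleanly into the two independent subproblems on $N(x_1, x, i-1)$ and $N(x+1, x_2, i-1)$; the summation over $r'$ enumerates every way of apportioning the required deletions between them, and the minimization over $x$ enumerates every admissible group size for $i$. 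Because weights are nonnegative, an optimal solution to a subproblem is never harmed by reusing it inside a larger one, so the principle of optimality applies and the recurrence computes the claimed minimum. As before, evaluating the table with $i$ ascending is acyclic; there are $O(n^5)$ states and a type-(\ref{dp3_state_type_4}) state costs $O(n^2)$ to evaluate (iterating over the $O(n)$ choices of $x$ and, for each, the $O(n)$ choices of $r'$), giving the $O(n^7)$ bound. Recovering an actual minimum-weight set $N'$ and a wonderful partition of $N \setminus N'$ is then routine by storing back-pointers.

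I do not expect a genuine obstacle here: the statement is essentially the conjunction of two refinements already carried out in the excerpt, and the only thing to be careful about is bookkeeping — keeping the $r$-budget consistent (in particular the $r-1$ shift when deleting $i$ and the convolution $r = r' + r''$ when splitting), and making sure the $1 \Mod{x}$ and $(k+1)\Mod{x_2}$ group-size updates are threaded through unchanged. If anything deserves a sentence of care it is the claim that it always suffices to delete agents from $N$ rather than ``shrink'' the incomplete group — but that is immediate from the DP's semantics, since the incomplete group is an artifact of the recursion and not a real agent, so no subtlety arises beyond what \cref{lemma:hiking-x-delete-poly} already handled.
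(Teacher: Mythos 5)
Your proposal is correct and matches the paper's proof essentially verbatim: the paper likewise takes the Boolean DP from \cref{lemma:hiking-x-delete-poly}, reinterprets each state as the minimum total weight of the $r$ removed agents (with $\infty$ for infeasibility), swaps base cases $0,1$ for $\infty,0$, replaces $\lor$ by $\min$ and $\land$ by $+$, and charges $w_i$ instead of $1$ for the deletion branch, yielding the same $O(n^5)$ states times $O(n^2)$ transition cost. Nothing to add.
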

\begin{proof}[Proof Sketch] In the proof of \cref{lemma:hiking-x-delete-poly} for the unweighted version, we defined a boolean DP as follows: ``$\mathit{dp}[x_1, x_2, i, k, r] = 1$ if and only if there exist $r$ agents that can be removed from $N(x_1, x_2, i)$ such that the remaining agents admit a wonderful partition into groups of sizes in $[x_1, x_2]$ assuming that we start with an incomplete group of current size $k$ which has to have final size $x_2$''. For the current problem, we want the states to signal not only possibility/impossibility but also what is the minimum total weight of those $r$ removed agents. Hence, we replace this definition by ``$\mathit{dp}[x_1, x_2, i, k, r]$ is the minimum total weight of $r$ agents that can be removed from $N(x_1, x_2, i)$ such that [...] (or $\infty$ if impossible)''. The rest of the reasoning stays analogous with minor changes: the values $0, 1$ in the base cases become $\infty, 0,$ disjunctions ($\lor$) are replaced by ``$\min$'' and conjunctions ($\land$) by $+.$ Finally, the fourth recurrence relation becomes $\mathit{dp}[x_1, x_2, i, k, r] = \min\{w_i + \mathit{dp}[x_1, x_2, i - 1, k, r - 1, d], X\}.$
\end{proof}

\noindent We make use of \Cref{lemma:hiking-x-delete-min-weight-poly} to show the following.

\begin{restatable}{theorem}{maxsatweight}
\fauxsc{Hiking-Max-Satisfied-Weighted} is solvable in $O(n^8)$ time.
\end{restatable}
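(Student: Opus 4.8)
The plan is to reduce \fauxsc{Hiking-Max-Satisfied-Weighted} to \fauxsc{Hiking-x-Delete-Min-Weight} in essentially the same way that \fauxsc{Hiking-Max-Satisfied} was reduced to \fauxsc{Hiking-x-Delete}, but being careful that binary search is no longer available because the objective is a weighted sum rather than a count. First I would observe that, exactly as in the unweighted case, a partition $\pi$ of $N$ in which the set of satisfied agents is $N_\pi$ can be transformed into one that is at least as good and in which all unsatisfied agents sit in a single common coalition: if $N' = N \setminus N_\pi$ is the set of unsatisfied agents, then replacing the coalitions of these agents by the single coalition $N'$ keeps all of $N_\pi$ satisfied (their coalitions are untouched) and can only help. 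Hence maximizing $\sum_{i \in N_\pi} w_i$ is equivalent to minimizing $\sum_{i \in N'} w_i$ over sets $N'$ such that $N \setminus N'$ admits a wonderful partition — but crucially we must also allow for the possibility that putting the discarded agents together is \emph{not} needed, which is handled by the dummy-agents trick: introduce, for each possible value $0 \le x \le |N|$, a set $D_x$ of $x$ dummy agents happy with any group size, and ask \fauxsc{Hiking-x-Delete-Min-Weight} on $N \cup D_x$ with removal count $x$, where the dummies are given weight $+\infty$ (or simply larger than $\sum_{i\in N} w_i$) so that an optimal solution never removes a dummy and therefore removes exactly $x$ genuine agents while the $x$ dummies fill out a ``junk'' coalition of the right total size.

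Next I would spell out why iterating over $x$ is necessary and sufficient. For a fixed $x$, solving \fauxsc{Hiking-x-Delete-Min-Weight} on $N \cup D_x$ returns the minimum weight of a size-$x$ subset $N' \subseteq N$ such that $(N \setminus N') \cup D_x$ has a wonderful partition; since the $x$ dummies can always be grouped among themselves (they approve of every size, and in particular a coalition of exactly the dummies has a size they approve), this is the same as asking that $N \setminus N'$ be partitionable into wonderful coalitions together with one extra ``leftover'' coalition of any size — which is exactly the structure of a best partition of $N$ whose unsatisfied set has size $x$. Taking the minimum of the returned values over all $x \in \{0, 1, \dots, |N|\}$ (ignoring the $x$ for which the problem is infeasible) gives the optimum of \fauxsc{Hiking-Max-Satisfied-Weighted}, and the corresponding witness partition is obtained by taking the wonderful partition of $N \setminus N'$ returned by the subroutine and adding the single coalition $N'$ (the dummies are discarded). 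One subtlety to check: when $N'$ is empty we just return a wonderful partition of $N$ itself; and when $x = |N|$ we return the trivial partition with one coalition $N$, which is always feasible, so the overall problem is never infeasible.

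Finally I would account for the running time. Each invocation of \fauxsc{Hiking-x-Delete-Min-Weight} costs $O(n^7)$ by \Cref{lemma:hiking-x-delete-min-weight-poly} (the instance $N \cup D_x$ has $O(n)$ agents, so the bound is unchanged up to constants), and we invoke it for each of the $n+1$ values of $x$, for a total of $O(n^8)$; recovering the partition is dominated by this. The main obstacle — and the reason the bound is $O(n^8)$ rather than $O(n^7 \log n)$ as in the unweighted \fauxsc{Hiking-Max-Satisfied} — is precisely that the weighted objective is not monotone in $x$: removing one more agent need not decrease the attainable weight, so the set of feasible-and-improving values of $x$ is not an interval and we cannot binary search; we are forced to try all $x$. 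A secondary point to get right is the choice of dummy weights: they must exceed $\sum_{i \in N} w_i$ so that no optimal solution ever deletes a dummy, which guarantees that exactly $x$ real agents are deleted and that the reported weight equals $\sum_{i \in N'} w_i$ with $N' \subseteq N$; alternatively one can use the value $\infty$ directly, as was done in the weighted delete DPs above. With these pieces in place the theorem follows.
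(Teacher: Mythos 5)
Your algorithm is exactly the paper's: for each $x\in\{0,\dots,n\}$, run \fauxsc{Hiking-x-Delete-Min-Weight} on $N\cup D_x$ with removal count $x$, take the best result over all $x$, and pay an extra $O(n)$ factor because the weighted objective is not monotone in $x$, so the binary search from the unweighted case is unavailable. The runtime accounting is also right, and your choice of giving dummies weight $+\infty$ (so that exactly $x$ real agents are deleted) is a harmless variation on the paper's bookkeeping, since you enumerate all $x$ anyway.

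However, your correctness argument contains a genuinely false step. You claim that an optimal partition can be rearranged, at no loss, so that all unsatisfied agents form one coalition with the satisfied agents' coalitions ``untouched'', and later that the dummies ``can always be grouped among themselves'', so that feasibility of $(N\cup D_x)\setminus N'$ is equivalent to ``$N\setminus N'$ has a wonderful partition plus one leftover coalition''. Both claims fail: extracting an unsatisfied agent from a coalition shrinks that coalition and can unsatisfy its remaining members. Take $n=3$ with $S_1=S_2=\{3\}$ and $S_3=\{2\}$: the partition $\{1,2,3\}$ satisfies agents $1$ and $2$, but moving agent $3$ into its own coalition leaves $1$ and $2$ in a size-$2$ group and satisfies nobody; moreover $N\setminus\{3\}=\{1,2\}$ admits no wonderful partition at all, so your ``leftover coalition'' reformulation would wrongly declare $x=1$ infeasible, whereas $(N\setminus\{3\})\cup D_1=\{1,2,d\}$ does admit the wonderful partition $\{1,2,d\}$ with the dummy \emph{interspersed} among real agents. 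This is precisely the distinction the paper draws between \fauxsc{Hiking-Min-Delete} and \fauxsc{Hiking-Max-Satisfied}. The correct justification---which your write-up needs and the paper inherits from the unweighted case---is that the dummies stand in for the removed agents \emph{inside their original coalitions}, preserving every coalition size; this is what makes ``$\pi$ leaves exactly the agents of $N'$ unsatisfied'' equivalent to ``$(N\setminus N')\cup D_x$ has a wonderful partition''. With that substitution your argument, and hence the theorem, goes through.
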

\begin{proof}[Proof Sketch] In the proof of \cref{th:hiking-max-satisfied-poly} for the unweighted case, we were looking for the largest $k$ such that there exists a size-$k$ subset $N' \subseteq N$ (corresponding to $k$ agents which we do not require to be satisfied) such that $(N \setminus N') \cup D_k$ admits a wonderful partition, where $D_{k}$ was a set of $k$ dummy agents happy with any group size. We then argued that we could relax to asking for a size-$k$ subset $N' \subseteq (N \cup D_k)$ such that $(N \cup D_k) \setminus N'$ admits a wonderful partition, which can be done using our polynomial-time algorithm for \fauxsc{Hiking-x-Delete-Min} in \cref{lemma:hiking-x-delete-poly}. This time, we follow a similar approach, except without binary search: we will try out all values $0 \leq k \leq n$ and ask for a size-$k$ subset $N' \subseteq (N \cup D_{k})$ such that $(N \cup D_k) \setminus N'$ admits a wonderful partition. For a fixed $k$, out of all abiding $N',$ we want one minimizing $\sum_{i \in N'}w_i.$ Such an $N'$ can be computed using our $O(n^7)$ algorithm for \fauxsc{Hiking-x-Delete-Min-Weight} in \cref{lemma:hiking-x-delete-min-weight-poly}. We do this for all values $0 \leq k \leq n$ and take the minimum-weight solution, adding an extra $O(n)$ factor, so the overall complexity is $O(n^8).$  
\end{proof}


\section{Single-Peaked Preferences Over Group Sizes}\label{sec:distanceMin}

We extend the binary version of the hiking problem by considering single-peaked preferences over group sizes. We assume that each agent $i$ has an ideal group size $s_i$ and the cost of agent $i$ if placed in a group of size $s$ is given by a cost function dependent on $s_i$ and $s$.

Given these ingredients, minimizing the social cost can be done in two variants:~a utilitarian variant and an egalitarian variant. In the utilitarian variant, the goal is to minimize the total cost of the agents, while in the egalitarian version we want the cost of the agent having the highest cost to be as low as possible, i.e., we replace summation with maximum. We are also interested in a variation of the problem where the hike organizers consider it reasonable to exclude at most $\alpha$ of the agents from the hike, the goal becoming to select the agents to remove and then to organize a hike with best social cost among the remaining agents.\footnote{The original problem can be seen to be the $\alpha = 0$ case.} Note that we assume that each excluded agent has a cost of $0$. This is different from assuming that excluded agents have to hike alone, which is covered by the case $\alpha=0$.

More formally, we assume that each agent $i \in N$ announces an ideal coalition size $s_i$; i.e., agent $i$ would be most happy when belonging to a coalition of size $s_i$. 
Moreover, given some cost function $\cost: N^2 \rightarrow \mathbb{R}$ agent $i$ incurs a cost of $\cost(s_i,s)$ if placed in a coalition of size $s \in [n]$ and a cost equal to $0$ if not participating in the hike. We assume $c$ to be monotone, i.e., $\cost(s_i,s) \leq \cost(s_i,s')$ if $s_i \leq s \leq s'$ or $s' \leq s \leq s_i$.  Notice that since agent $i$ incurs a disutility equal to $\cost(s_i,s)$, where $s_i$ is the most preferred size of $i$, the monotonicity condition on $\cost(\cdot, \cdot)$ implies that the preferences of agents are single-peaked w.r.t.\ the natural ordering. We refer to the next section for a formal definition.
Given a partition $\pi$, we recall that $\pi(i)$ denotes the coalition of agent $i$; if $i$ is not participating in the hike we write $\pi(i)=\bot$. Furthermore, by slight abuse of notation, for an agent $i$ we write $i \in \pi$ to indicate that agent~$i$ takes part in the hike; i.e., $\pi(i) \neq \bot.$

The utilitarian social cost of a partition $\pi$ is given by $\cost(\pi) = \sum_{i\in\pi}\cost(s_i,\modulus{\pi(i)})$ while the egalitarian social cost is given by $\cost(\pi) = \max_{i\in \pi}\cost(s_i,\modulus{\pi(i)})$. The goal is, therefore, to find a partition minimizing the social cost under the constraint $\modulus{\set{i \mid \pi(i)=\bot}} \leq \alpha$, or equivalently, $\modulus{\set{i \mid i\in\pi}} \geq n-\alpha$, where $\alpha$ is the maximum number of agents that are allowed to not participate in the hike. Without loss of generality, we assume that $s_1 \leq \ldots \leq s_n.$

We begin by proving two structural properties of optimal solutions that will allow us to greatly reduce the space of solutions that have to be considered, hence enabling us later to give efficient dynamic programming algorithms computing optimal partitions for both the utilitarian and the egalitarian settings. For the utilitarian social cost we need a mild assumption on the cost function $\cost$.

\begin{definition}
    A function $\cost: N^2 \rightarrow \mathbb{R}$ fulfills the \emph{quadrangle inequality} if and only if
    \[\cost(a,c) + \cost(b,d) \leq \cost(a,d) + \cost(b,c)\text{ for all } a \leq b \leq c \leq d.\]
    Analogously, it fulfills the \emph{reverse quadrangle inequality} if and only if 
    \[\cost(a,c) + \cost(b,d) \leq \cost(a,d) + \cost(b,c)\text{ for all } a \geq b \geq c \geq d.\]
\end{definition}
Note that quadrangle inequality and reverse quadrangle inequality are equivalent if $\cost$ is symmetric, i.e.\ $\cost(a,b) = \cost(b,a)$. Moreover, notice that if $cost(\cdot,\cdot)$ is the Euclidean distance on $\mathbb{R}$ or $\cost(a,b)=|b-a|^k$ for any $k\geq 1$, then it satisfies both aforementioned quadrangle inequalities.
The first observation that our approach will hinge upon is that it is enough to consider \emph{size-monotonic} partitions. 
\begin{definition}[Size-Monotonicity]
A partition $\pi$ is \emph{size-monotonic} if for any two agents $i, j \in \pi$, with $i < j$, it holds that $|\pi(i)| \leq |\pi(j)|.$
\end{definition}
Roughly speaking, there are optimal solutions where all participating agents with lower preferred coalition sizes belong to smaller coalitions than agents with higher preferred sizes. We show this fact in the following, proven in \cref{app:omitted}.
\begin{restatable}{lemma}{sizemonotonicity}
\label{lemma:size-monotonic} The following properties hold:
\begin{enumerate}[(a)]
    \item In the utilitarian setting, if $\cost(\cdot,\cdot)$ is monotone and fulfills quadrangle inequality and reverse quadrangle inequality, then there exists a size-monotonic optimal solution.
    \item In the egalitarian setting, if $\cost(\cdot,\cdot)$ is monotone, then there exists a size-monotonic optimal solution.
\end{enumerate}

\end{restatable}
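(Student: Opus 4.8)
The plan is a local exchange argument. Fix an optimal partition $\pi$ for the objective at hand (utilitarian or egalitarian), subject to $\modulus{\set{i : \pi(i) = \bot}} \le \alpha$; such a $\pi$ exists as there are finitely many feasible partitions. If $\pi$ is size-monotonic we are done. Otherwise pick participating agents $i < j$ with $\modulus{\pi(i)} > \modulus{\pi(j)}$; since $s_1 \le \dots \le s_n$ we have $s_i \le s_j$, and the coalitions $\pi(i) \ne \pi(j)$ have distinct sizes. Form $\pi'$ by swapping $i$ and $j$ between their coalitions: $\pi'(i) = (\pi(j) \setminus \set{j}) \cup \set{i}$, $\pi'(j) = (\pi(i) \setminus \set{i}) \cup \set{j}$, and $\pi'(h) = \pi(h)$ otherwise. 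Then $\pi'$ is a partition with the same multiset of coalition sizes and the same non-participating set (so still feasible), and it differs from $\pi$ only in the costs charged to $i$ and $j$. The crux is that this swap does not increase the cost; granting this, $\pi'$ is again optimal and we iterate. For termination, use the potential $\Phi(\pi) := \sum_{h \in \pi} h \cdot \modulus{\pi(h)}$: with $p := \modulus{\pi(j)} < q := \modulus{\pi(i)}$ the swap changes it by exactly $(ip + jq) - (iq + jp) = (j - i)(q - p) \ge 1$, so $\Phi$ strictly increases; as $\Phi$ is a non-negative integer bounded by $n^3$, only finitely many swaps are possible, and the process can halt only at a size-monotonic partition.

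It remains to show the swap does not increase the cost. Write $a := s_i \le b := s_j$, and let $p < q$ be as above; the affected summands are $\cost(a, q), \cost(b, p)$ before and $\cost(a, p), \cost(b, q)$ after. For the egalitarian objective (part (b)) it suffices to prove $\max\set{\cost(a, p), \cost(b, q)} \le \max\set{\cost(a, q), \cost(b, p)}$, and I would bound each left-hand term by the right-hand side: if $a \le p$ then $\cost(a, p) \le \cost(a, q)$ because $p$ is no farther from the peak $a$ than $q$ is, while if $a > p$ then $\cost(a, p) \le \cost(b, p)$ because the peak $b$ is no closer to the fixed size $p$ than $a$ is; symmetrically $\cost(b, q) \le \cost(b, p)$ when $b \ge q$ and $\cost(b, q) \le \cost(a, q)$ when $b < q$. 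This only uses that $\cost$ does not increase when either coordinate is moved toward the other --- the given monotonicity, together with its natural symmetric reading for the cross-coordinate comparisons, which all the example cost functions satisfy.

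For the utilitarian objective (part (a)) it suffices to prove $\cost(a, p) + \cost(b, q) \le \cost(a, q) + \cost(b, p)$ --- i.e., that the array $\bigl(\cost(s, t)\bigr)_{s,t}$ of peaks against group sizes is \emph{Monge}. I would prove the equivalent claim that the function $h(x) := \cost(x, p) - \cost(x, q)$, with $p < q$ fixed as above, is non-decreasing in $x$, by examining $h(x+1) - h(x)$ by the position of $x, x+1$ relative to $[p, q]$: if $x+1 \le p$, then $h(x+1) - h(x) \ge 0$ is exactly the quadrangle inequality for $x \le x+1 \le p \le q$; if $q \le x$, it is the reverse quadrangle inequality for $x+1 \ge x \ge q \ge p$; and if $p \le x < x+1 \le q$, it holds directly by monotonicity, since $\cost(x+1, p) - \cost(x, p) \ge 0$ (moving away from $p$) and $\cost(x+1, q) - \cost(x, q) \le 0$ (moving toward $q$). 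Telescoping over $x = a, \dots, b$ gives $h(a) \le h(b)$, which rearranges to the target inequality. I expect this to be the main obstacle: one must notice that the two quadrangle inequalities only cover the configurations with both peaks on the same side of $[p, q]$, and that the straddling configurations must be --- and, perhaps surprisingly, can be --- handled by monotonicity alone; the rest (feasibility of $\pi'$, the potential, and the egalitarian case) is routine.
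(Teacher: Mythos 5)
Your proposal is correct and follows the same overall strategy as the paper's proof---repeatedly swap a bad pair $i<j$ with $\modulus{\pi(i)}>\modulus{\pi(j)}$ and show the exchange cannot increase the objective---but it executes both halves differently, and arguably more cleanly. For termination, the paper takes an optimal partition minimizing the number of bad pairs and asserts that one swap strictly reduces that count (the standard inversion-counting fact, left unproved); your potential $\Phi(\pi)=\sum_{h\in\pi} h\cdot\modulus{\pi(h)}$ increases by exactly $(j-i)(q-p)\ge 1$ per swap and makes termination immediate. For the key utilitarian inequality, the paper performs a direct six-case analysis on the relative order of $s_i, s_j$ and the two coalition sizes; you instead establish the Monge property of $\bigl(\cost(s,t)\bigr)_{s,t}$ by showing that $x\mapsto \cost(x,p)-\cost(x,q)$ is non-decreasing, with only three cases per unit step (quadrangle inequality left of $[p,q]$, reverse quadrangle inequality right of it, monotonicity inside) and a telescoping sum; the egalitarian arguments essentially coincide. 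One caveat applies to both proofs equally: the straddling cases compare costs across \emph{different peaks at the same group size} (your $\cost(x+1,p)\ge\cost(x,p)$ for $p\le x$; the paper's $\cost(s_i,a)\ge\cost(s_j,a)$ in its Cases 2a, 4a, 5a), which is monotonicity in the \emph{first} argument and does not literally follow from the paper's definition of monotone (peak fixed, assigned size varying), nor from the two quadrangle inequalities (consider $\cost(x,y)=f(x)+\modulus{x-y}$ with $f$ rapidly decreasing). You at least flag this explicitly as a ``symmetric reading'' of monotonicity, whereas the paper uses it silently, so this is not a gap relative to the paper's own argument---but if you want the lemma under the stated hypotheses verbatim, that step needs an added assumption or a different justification.
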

\noindent Finally, we will show that it is enough to consider size-monotonic partitions which are additionally \emph{compact}. The latter is defined as follows:
\begin{definition}[Compactness]
A coalition $C$ is \emph{compact} if it is of the form $C = \{i, i + 1, \ldots, j\}$ for some $i \leq j.$ A solution $\pi$ is \emph{compact} if all coalitions $C \in \pi$ are compact.
\end{definition}
We now show that we can modify any optimal size-monotonic partition so that it is size-monotonic and compact, as follows. See \cref{app:omitted} for the proof. 
\begin{restatable}{lemma}{lemmacompact}\label{lemma:compact} 
There exists an optimal partition which is size-monotonic and compact.
\end{restatable}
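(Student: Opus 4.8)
The plan is to start from an optimal size-monotonic partition $\pi$, which exists by \Cref{lemma:size-monotonic}, and massage it into one that is additionally compact without increasing the social cost or changing the set of participating agents. Since $\pi$ is size-monotonic, after discarding the non-participating agents and listing the participating agents in increasing order of index (equivalently, of preferred size $s_i$), the coalitions of $\pi$ already partition this sorted list into \emph{blocks of consecutive positions}; what may fail is that the agents inside a block are not consecutive \emph{in the original ground set} $N$, because some non-participating agents are interleaved with them. So the real content is: we may relabel/reorganize which participating agents sit in which coalition so that each coalition becomes an interval $\{i, i+1, \ldots, j\}$ of $N$, while keeping the multiset of coalition sizes fixed.

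First I would make precise the following exchange argument. Call two participating agents $a < b$ \emph{inverted} if $a \notin \pi(b)$, $b \notin \pi(a)$, $|\pi(a)| \geq |\pi(b)|$ and yet $a$ lies in a coalition that, in the size-monotonic ordering of coalitions, comes no earlier than $b$'s — more usefully, I would phrase the target invariant directly: order the coalitions $C_1, C_2, \ldots, C_m$ of $\pi$ by nondecreasing size (breaking ties arbitrarily but consistently), and say $\pi$ is \emph{consecutive} if the agents can be read off as $C_1$ getting the $|C_1|$ smallest-indexed participants, $C_2$ the next $|C_2|$, and so on. A consecutive size-monotonic partition is exactly a compact one (each coalition is then a contiguous block of $N$ restricted to participants; but since non-participants form their own trivial part, contiguity among participants together with size-monotonicity yields contiguity in $N$ — here I would need a small remark handling where the excluded agents go, namely they can be swept to the end or treated as singletons depending on the $\alpha$-convention, but the cost of excluded agents is $0$ so this is free). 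To reach consecutiveness, repeatedly pick the smallest-indexed participant $a$ that is "out of place", i.e., sitting in coalition $C_q$ while some participant $b < a$ sits in a later coalition $C_{q'}$ with $q' > q$; swap $a$ and $b$ between their coalitions. This does not change any coalition size, so it preserves feasibility and, crucially, preserves size-monotonicity of the size-sorted coalition structure. It remains to check the cost does not increase: $a$ moves from a coalition of size $|C_q| \le |C_{q'}|$ to one of size $|C_{q'}|$, and $b$ moves the other way; I would invoke monotonicity of $\cost$ in the egalitarian case (each individual cost we create, $\cost(s_b, |C_q|)$ and $\cost(s_a, |C_{q'}|)$, is bounded by costs already present, using $s_b \le s_a$ and $|C_q| \le |C_{q'}|$ — this is essentially the same computation as in the proof of \Cref{lemma:size-monotonic}(b)), and the quadrangle/reverse-quadrangle inequalities in the utilitarian case to show $\cost(s_a,|C_q|) + \cost(s_b,|C_{q'}|) \ge \cost(s_b,|C_q|) + \cost(s_a,|C_{q'}|)$ whenever $s_b \le s_a$ and $|C_q| \le |C_{q'}|$ (applying the inequality with the appropriate chain, splitting on whether the peaks lie below, between, or above the two sizes). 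Finally, termination: I would attach a potential such as $\sum_{a \in \pi} (\text{index of } a) \cdot (\text{rank of } a\text{'s coalition})$, or more simply argue that each swap strictly decreases the number of inverted pairs $(b,a)$ with $b<a$ but $b$ in a later coalition, a finite nonnegative quantity.

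The step I expect to be the main obstacle is the cost-nonincrease verification in the utilitarian case, because the quadrangle inequality as stated only directly handles a chain $a \le b \le c \le d$ (and the reverse one $a \ge b \ge c \ge d$), whereas here the relevant four numbers are $\{s_b, s_a\}$ (the peaks) and $\{|C_q|, |C_{q'}|\}$ (the sizes), which can interleave in any of several orders. I would handle this by casework on the position of the sizes relative to the peaks: if both sizes lie on the same side of both peaks, monotonicity alone suffices; if the sizes straddle the peaks, the quadrangle inequality (respectively its reverse) applied to the sorted quadruple gives exactly the needed rearrangement inequality. This is the same casework already implicitly carried out in \Cref{lemma:size-monotonic}(a), so I would either cite that lemma's proof or factor the computation out as a small standalone claim and reuse it here. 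Everything else — feasibility preservation, the $\alpha$-constraint, and termination — is routine bookkeeping.
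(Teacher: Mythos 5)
Your overall strategy matches the paper's: start from an optimal size-monotonic partition (which exists by \Cref{lemma:size-monotonic}) and repair compactness violations by cost-non-increasing swaps. Your handling of interleavings among \emph{participating} agents is essentially the paper's Step~1, and is in fact simpler than you fear: if $b < a$ with $a \in C_q$, $b \in C_{q'}$ and $q' > q$ in the size-sorted order, then size-monotonicity of the current partition forces $\modulus{C_q} = \modulus{C_{q'}}$, so your swap changes nobody's coalition size and is exactly cost-neutral --- no quadrangle inequality is needed for this step at all. (Your opening claim that size-monotonicity already makes the coalitions consecutive blocks of the participant list is false --- equal-size coalitions can interleave, e.g.\ $\{1,3\}$ and $\{2,4\}$ --- but your swap procedure repairs precisely this, so that is an internal inconsistency rather than a fatal error.)

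The genuine gap is the excluded agents. You dismiss them with ``they can be swept to the end \dots\ the cost of excluded agents is $0$ so this is free,'' but removing an excluded agent $k$ from the interior of the index range of a coalition $C$ with $\min C = i$ and $\max C = j$ means changing \emph{which} agent is excluded: $k$ must enter $C$ and some current member must leave the hike, so the cost changes by $\cost(s_k, \modulus{C})$ minus the cost of the evicted agent, which is not automatically nonpositive. The paper's Step~2 makes this work by choosing the evicted agent according to $k$'s peak: swap $k$ with $j$ when $s_k \geq \modulus{C}$ (then $\modulus{C} \leq s_k \leq s_j$ and monotonicity gives $\cost(s_k, \modulus{C}) \leq \cost(s_j, \modulus{C})$), and with $i$ when $s_k < \modulus{C}$ (then $s_i \leq s_k < \modulus{C}$). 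One also needs a termination argument --- the paper uses the potential $\sum_{C \in \pi}(\max C - \min C)$, which strictly decreases with each such swap --- and a check that these swaps do not reintroduce interleavings among participants. None of this is in your proposal, and ``excluded agents cost $0$'' does not substitute for it.
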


\noindent With the above observations, we now know that it is enough to give an efficient algorithm to compute the best size-monotonic and compact solution. We do so in the following. In fact, our algorithm will only directly leverage compactness.\footnote{However, size-monotonicity is crucial in showing that considering compact solutions is enough.} To begin, for any two agents $i \leq j$ define $c(i, j)$ to be the social cost induced by agents $i, i + 1, \ldots, j$ when forming the coalition $\{i, i + 1, \ldots, j\}.$ In particular, $c(i, j) = \sum_{k = i}^{j}\cost(s_k,j - i + 1)$ in the utilitarian case, and similarly with summation replaced by maximum in the egalitarian case. With this definition in place, note that selecting the best compact solution $\pi$ with at most $\alpha$ agents not taking part in the hike amounts to selecting compact non-intersecting coalitions $C_1, \ldots, C_\ell$ 
where $C_k = \{a_k, a_k+1, \ldots, b_k -1, b_k\},$
such that $\sum_{k = 1}^{\ell}(b_k - a_k + 1) \geq n - \alpha$, 
and the sum/maximum of $c(a_1, b_1), \ldots, c(a_\ell, b_\ell)$ is minimized. Without loss of generality we can assume that $b_1 < a_2$, $b_2 < a_3$, \dots, $b_{\ell -1} < a_{\ell}$, i.e., we assume that the coalitions are sorted by index in increasing order. Before giving the actual algorithm, we note that, to get the best efficiency possible, we will need that the values $c(i, j)$, for all pairs $(i, j)$ with $i\leq j$, can be computed in total time $O(n^2)$ as a preprocessing step. We show this now.

\begin{lemma}\label{lemma:cost-precomputation} All values $c(i, j)$ for $i \leq j$ can be computed in total time $O(n^2).$
\end{lemma}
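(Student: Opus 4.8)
The plan is to compute all $c(i,j)$ by fixing the coalition size $s = j - i + 1$ and sweeping the left endpoint $i$ from $1$ to $n - s + 1$, maintaining a running aggregate of $\cost(s_k, s)$ over the window $k \in \{i, i+1, \ldots, i+s-1\}$. First I would note that for a fixed $s$, the cost $\cost(s_k, s)$ of agent $k$ depends only on $k$ (since $s$ is fixed), so I can precompute the length-$n$ array $v_k := \cost(s_k, s)$ in $O(n)$ time. In the utilitarian case, $c(i, i+s-1) = \sum_{k=i}^{i+s-1} v_k$, which is a sliding window sum: compute it for $i = 1$ directly in $O(s)$ time, then for each subsequent $i$ update in $O(1)$ by subtracting $v_{i-1}$ and adding $v_{i+s-1}$. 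Summing over all $n$ values of $s$, each contributing $O(n)$ work, gives $O(n^2)$ total.

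For the egalitarian case, $c(i, i+s-1) = \max_{k=i}^{i+s-1} v_k$ is a sliding-window maximum. The standard monotonic-deque technique computes all windows of a fixed length over an $n$-element array in $O(n)$ amortized time: maintain a deque of indices whose $v$-values are decreasing, popping from the back whenever a new element dominates and from the front whenever the front index leaves the window. This again yields $O(n)$ per value of $s$ and $O(n^2)$ overall. Alternatively — and perhaps cleaner to state — one can use the monotonicity hypothesis on $\cost$ more directly: since $\cost(s_k, s)$ is single-peaked in $k$ with respect to whether $s_k \le s$ or $s_k \ge s$, the maximum over a contiguous block of agents is attained at one of the two endpoints, i.e. $\max_{k=i}^{j} \cost(s_k, s) = \max\{\cost(s_i, s), \cost(s_j, s)\}$, because the $s_k$ are sorted and $\cost(\cdot, s)$ is nonincreasing then nondecreasing along the sorted order; hence each $c(i,j)$ in the egalitarian case is computable in $O(1)$ after $O(n)$ preprocessing per size, trivially giving $O(n^2)$.

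I would organize the write-up by handling the two settings in turn, and in each case state the recurrence that updates the window aggregate in $O(1)$. The only mild subtlety — and the one place a reader might want a sentence of justification — is the egalitarian sliding maximum: one must either invoke the monotonic-deque primitive as a black box, or appeal to the structural fact that sorting $s_1 \le \cdots \le s_n$ together with the monotonicity of $\cost(\cdot, s)$ forces the block-maximum to the endpoints. I expect the latter observation to be the ``main obstacle'' only in the sense of needing to be spotted; everything else is a routine prefix-sum / sliding-window argument. Since we have $O(n)$ values of $s$ and $O(n)$ work per value, the total preprocessing time is $O(n^2)$, as claimed.
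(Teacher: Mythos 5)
Your primary argument is exactly the paper's proof: fix the window length $\ell+1$, observe that the values $c(i,i+\ell)$ are sliding-window aggregates (sum or maximum) over the array $\cost(s_k,\ell+1)$, and compute each length in $O(n)$ time, giving $O(n^2)$ overall; the paper simply invokes ``standard sliding window techniques'' where you spell out the prefix-difference update and the monotonic deque. One caveat: your ``cleaner'' alternative for the egalitarian case --- that $\max_{k=i}^{j}\cost(s_k,s)=\max\{\cost(s_i,s),\cost(s_j,s)\}$ because $\cost(\cdot,s)$ is decreasing-then-increasing along the sorted peaks --- is not justified by the paper's hypotheses, since monotonicity is assumed only in the \emph{second} argument of $\cost$ (the assigned size, relative to a fixed peak), and nothing constrains how $\cost(s_k,s)$ varies as the peak $s_k$ changes with $s$ fixed; so you should drop that remark and keep the deque argument, which is correct and suffices.
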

\begin{proof} We will compute the values separately for each value of $j - i.$ In particular, for each $0 \leq \ell < n$ we will compute all values $c(i, i + \ell)$ for $1 \leq i \leq n - \ell$ in linear time. To do this, for a fixed $\ell,$ note the contribution of agent $k$ to the $c$-values that it counts into is precisely $\cost(s_k,\ell + 1).$ Therefore, the values $[c(i, i + \ell)]_{1 \leq i \leq n - \ell}$ that we want to compute are aggregate queries over a sliding window of length $\ell + 1$ over the sequence $[\cost(s_k,\ell + 1)]_{1 \leq k \leq n}.$ Depending on the utilitarian/egalitarian goal, the aggregate can be either summation or maximum, but in either case, all the $n - \ell$ aggregates can be computed in linear time using standard sliding window techniques.
\end{proof}
We are now ready to present our algorithm. We construct a weighted directed acyclic graph $\calG$ corresponding to the problem instance, as follows. We are given a source node $s=(0,0)$ and a target node $ t= (n+1, *)$.
For the remaining vertices, we have one vertex for each pair $(i, j)$ with $1 \leq i \leq n + 1$ and $0 \leq j \leq \alpha$. Intuitively, vertex $(i, j)$ has the meaning ``agent $1 \leq i \leq n + 1$  is the next one to consider\footnote{Indeed, there is a ``dummy'' agent $n + 1$ signifying that there are no more agents to consider.} and so far we have excluded $0 \leq j \leq \alpha$ agents from the hike. 
The source $s$ is connected with a directed edge towards $(1,0)$ and such an edge has weight $0$, while $t$ is reachable from the node $(n+1, j)$, for each $0 \leq j \leq \alpha$, via an edge of weight $0$.
For the remaining edges, we add the following two types:
\begin{itemize}
\item We add a weighted edge $(i, j) \xrightarrow{0} (i + 1, j + 1)$ for all $1 \leq i \leq n$ and $0 \leq j < \alpha$. Intuitively, these correspond to excluding agent $i$ from the hike.
\item We add a weighted edge $(i, j) \xrightarrow{c(i, k)}(k + 1, j)$ for all $1 \leq i \leq k \leq n$ and $0 \leq j \leq \alpha,$. Intuitively, these correspond to adding a new coalition $C = \{i, i + 1, \ldots, k\}$ to the hike, incurring a cost of $c(i, k).$ 
\end{itemize}
\noindent The next lemma establishes how paths in $\calG$ correspond to compact solutions to our problem and its statement immediately follows by the construction of $\calG$.

\begin{lemma}\label{lemma:solutions-to-paths} There is a bijection from compact solutions to $s$-$t$ paths in $\calG.$ Moreover, the social cost of a compact solution is the cost of the associated path, defined as either the sum or the maximum of the costs of its constituent edges.
\end{lemma}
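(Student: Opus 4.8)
The statement asserts two things: that $\Phi$, the natural map sending a compact solution to a walk in $\calG$, is a bijection onto $s$-$t$ paths, and that it preserves cost under either aggregation. I would prove both by exhibiting $\Phi$ and its inverse explicitly and then reading off the cost. For the forward direction, take a compact solution given by index-sorted, pairwise non-intersecting compact coalitions $C_1 = \{a_1, \ldots, b_1\}, \ldots, C_\ell = \{a_\ell, \ldots, b_\ell\}$ with excluded set $E = [n] \setminus \bigcup_m C_m$, where $\modulus{E} \leq \alpha$. Build a walk starting with $s \to (1,0)$; maintain a current vertex $(i,j)$ with the invariant that $j$ equals the number of agents in $E$ among $\{1, \ldots, i-1\}$. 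Since the coalitions are sorted and non-intersecting, the ``next'' agent $i$ is either in $E$ — then take the exclusion edge $(i,j) \to (i+1, j+1)$ — or equals some $a_m$ — then take the coalition edge $(i,j) \xrightarrow{c(a_m, b_m)} (b_m + 1, j)$. We terminate at $(n+1, \modulus{E})$ and append the edge to $t$. Because $\modulus{E} \leq \alpha$, every intermediate second coordinate lies in $[0,\alpha]$, so the walk is a legitimate $s$-$t$ path in $\calG$.

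For the inverse, let $P$ be any $s$-$t$ path. It necessarily begins with $s \to (1,0)$ and ends with $(n+1, j^*) \to t$ for some $j^* \in [0,\alpha]$, and every internal edge is either of exclusion type $(i,j) \to (i+1,j+1)$ or of coalition type $(i,j) \to (k+1,j)$. In both cases the first coordinate strictly increases, so the sequence of first coordinates visited is $1 = i_0 < i_1 < \cdots < i_m = n+1$, which partitions $[n]$ into consecutive blocks; each block is either a singleton produced by an exclusion edge or a block $\{i_{t}, \ldots, i_{t+1}-1\}$ produced by a coalition edge. Declaring the coalition blocks, in order, to be $C_1, \ldots, C_\ell$ yields compact, pairwise non-intersecting, index-sorted coalitions, and the number of exclusion edges equals $j^* \leq \alpha$, so at most $\alpha$ agents are excluded: this is a valid compact solution $\Psi(P)$. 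That $\Psi \circ \Phi$ and $\Phi \circ \Psi$ are the identity is then immediate, since both maps merely translate between a compact solution and the induced decomposition of $[n]$ into excluded singletons and coalition blocks, read in increasing order of agent index.

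For the cost claim, observe that along any such path the edges $s \to (1,0)$ and $(n+1,j^*) \to t$ and all exclusion edges carry weight $0$, while the coalition edge for block $C_m = \{a_m, \ldots, b_m\}$ carries weight $c(a_m, b_m)$, which by definition of $c(\cdot,\cdot)$ is exactly the social-cost contribution of the coalition $C_m$ (sum of $\cost(s_k, \modulus{C_m})$ over $k \in C_m$, or the maximum thereof). Hence the sum of the path's edge weights equals $\sum_{m} c(a_m, b_m)$, the utilitarian social cost of $\Psi(P)$, and the maximum of the edge weights equals $\max_m c(a_m, b_m)$, the egalitarian social cost, where in the egalitarian case we use that the $0$-weight edges do not affect the maximum (adopting the convention that an all-excluded hike has social cost $0$, matching the empty maximum). \textbf{Main obstacle.} There is no genuine difficulty here; the only points needing care are arguing that the first coordinate is \emph{strictly} monotone along every $s$-$t$ path, so that the block decomposition is forced and the inverse map is well-defined, and keeping the exclusion count synchronized with the $j$-coordinate in both directions so that the feasibility constraint $\modulus{N'} \leq \alpha$ translates exactly to $j \leq \alpha$; the degenerate all-excluded instance is dispatched by the empty-maximum convention noted above.
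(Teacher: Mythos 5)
Your proof is correct and takes the same (and essentially the only) route: the paper offers no written argument, stating only that the lemma ``immediately follows by the construction of $\calG$,'' and your explicit two-way map---using strict monotonicity of the first coordinate to force the block decomposition and the exclusion-count invariant to match $j$ with $\modulus{N'}\leq\alpha$---is precisely the routine verification the paper leaves implicit. Your side remark about $0$-weight edges and the empty-maximum convention in the egalitarian case is a fair point of care that the paper itself does not address.
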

As a result, computing an optimal compact solution amounts to finding a minimum cost $s$-$t$ path in $\calG$; this gives us a polynomial time algorithm for computing an optimal solution. 

\begin{restatable}{theorem}{thmdistancemin}\label{thm:distance_min}
A hike with minimum social cost can be computed in time $O(n^2(\alpha + 1)).$
\end{restatable}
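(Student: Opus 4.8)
The plan is to combine the structural reductions (Lemmas on size-monotonicity and compactness) with the graph construction just laid out, and then simply invoke a shortest-path computation on the acyclic graph $\calG$. By \Cref{lemma:compact}, there is an optimal partition which is size-monotonic and compact, so it suffices to optimize over compact solutions. By \Cref{lemma:solutions-to-paths}, compact solutions are in cost-preserving bijection with $s$-$t$ paths in $\calG$, where the cost of a path is the sum (utilitarian) or maximum (egalitarian) of its edge weights. Hence an optimal compact solution corresponds to a minimum-cost $s$-$t$ path, and once we have such a path we recover the partition via the bijection.

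The remaining work is to bound the running time. First I would invoke \Cref{lemma:cost-precomputation} to compute all values $c(i,j)$ in $O(n^2)$ time as a preprocessing step; note this is subsumed by the final bound since $\alpha \geq 0$. Next I would count the vertices and edges of $\calG$: there are $O(n(\alpha+1))$ vertices (one per pair $(i,j)$ with $1 \leq i \leq n+1$, $0 \leq j \leq \alpha$, plus $s$ and $t$). For edges, the ``exclusion'' edges $(i,j)\to(i+1,j+1)$ number $O(n(\alpha+1))$, while the ``coalition'' edges $(i,j)\xrightarrow{c(i,k)}(k+1,j)$ number $O(n^2(\alpha+1))$ since for each $j$ there are $O(n^2)$ choices of $i \leq k$; the source/target edges add only $O(\alpha+1)$ more. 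So $|E(\calG)| = O(n^2(\alpha+1))$ and all edge weights are available in $O(1)$ after preprocessing.

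Finally, $\calG$ is a DAG: every edge strictly increases the first coordinate $i$ (with $s$ having $i=0$ and $t$ having $i=n+1$), so topologically sorting by $i$ lets us compute, in a single pass, for every vertex the minimum-cost path from $s$ to it, relaxing each outgoing edge once. In the utilitarian case the relaxation is the usual $\mathrm{dist}[v] \gets \min(\mathrm{dist}[v], \mathrm{dist}[u] + w(u,v))$; in the egalitarian case it is $\mathrm{dist}[v] \gets \min(\mathrm{dist}[v], \max(\mathrm{dist}[u], w(u,v)))$, which is equally valid for shortest paths in a DAG since $\max$ is monotone and the bottleneck-path optimum also decomposes over subpaths. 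This runs in $O(|V(\calG)| + |E(\calG)|) = O(n^2(\alpha+1))$ time, matching the preprocessing cost. Recovering the optimal partition is done by storing, at each vertex, the predecessor edge used, then tracing back from $t$ and reading off the coalitions (from coalition edges) and excluded agents (from exclusion edges); this adds only linear overhead.

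I do not anticipate a genuine obstacle here: the structural heavy lifting is already done by \Cref{lemma:compact} and \Cref{lemma:solutions-to-paths}, and the rest is a standard DAG shortest-path argument. The only point requiring a sentence of care is the egalitarian case, where one must observe that the min-max (bottleneck) objective still obeys the optimal-substructure property needed for the single-pass DAG relaxation to be correct; I would spell that out explicitly rather than leaving it implicit.
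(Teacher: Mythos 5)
Your proposal is correct and follows essentially the same route as the paper's proof: reduce to compact solutions via \Cref{lemma:compact}, use the bijection of \Cref{lemma:solutions-to-paths} to turn the problem into a minimum-cost $s$-$t$ path computation on the acyclic graph $\calG$, precompute the $c(i,j)$ values in $O(n^2)$ by \Cref{lemma:cost-precomputation}, and bound the runtime by the $O(n^2(\alpha+1))$ size of $\calG$. Your explicit remark that the egalitarian (bottleneck) objective still admits the single-pass DAG relaxation is a welcome detail the paper leaves implicit, but it does not change the argument.
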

\begin{proof} By Lemma \ref{lemma:compact} it is enough to compute the best hike among compact solutions. To do so, we construct the graph $\calG$ corresponding to the problem instance. Subsequently, we compute an $s$-$t$ path in $\calG$ of minimum cost. This can be done in time linear in the size of the graph, as the graph is acyclic. Correctness is assured by \Cref{lemma:solutions-to-paths}. For the time bound, note that the number of vertices in the graph is $O(n(\alpha + 1))$ and the number of edges is $O(n^2(\alpha + 1)).$ Moreover, edge costs can be computed in constant time after $O(n^2)$ total precomputation by Lemma \ref{lemma:cost-precomputation}. Overall, we get a time complexity of $O(n^2(\alpha +1)).$
\end{proof}


\section{Wonderful Partitions Versus Minimum Egalitarian Partitions}\label{sec:mineg}
So far we assumed each agent has an ideal group size, a peak, and there exists a cost function $cost(x, y)$ which expresses the cost that any agent having ideal group size $x$ incurs if placed in a coalition of size $y$.
More broadly, agents may express their disutility with any cost function, that is, for each agent $i$ there is a mapping $\cost_i: N\rightarrow \mathbb{R}$ simply expressing the cost agent $i$ incurs when assigned to a coalition of a certain size.  

In this section, we are interested in finding the minimum egalitarian cost achievable by any partition and we denote by \minEg\ the problem of computing this value.
In \Cref{sec:distanceMin}, we have already shown that whenever $cost_i(s)= cost(s_i, s)$, \minEg\ can be computed in $O(n^2)$. In what follows, we exploit the connection between \wonderful\ and \minEg\ delineating the tractability of the latter with respect to the properties of the cost functions.
First, we show a general connection between \wonderful\ and \minEg. 

\begin{proposition}\label{lemm:equivalence}
\wonderful\ and \minEg\ are polynomial time equivalent.
\end{proposition}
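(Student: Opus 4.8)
The plan is to prove the two directions of the polynomial-time equivalence separately, by mutual reduction.

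\textbf{From \wonderful\ to \minEg.} Given an instance of \wonderful\ with approval sets $(S_i)_{i \in N}$, I would build a \minEg\ instance on the same agent set by defining, for each agent $i$, the cost function $\cost_i(s) = 0$ if $s \in S_i$ and $\cost_i(s) = 1$ otherwise. Then any partition $\pi$ has egalitarian cost $0$ exactly when every agent is in a coalition of an approved size, i.e.\ exactly when $\pi$ is wonderful; otherwise its egalitarian cost is $1$. Hence the minimum egalitarian cost is $0$ iff a wonderful partition exists, and a partition achieving egalitarian cost $0$ \emph{is} a wonderful partition. This reduction is clearly polynomial (the cost functions have size $O(n)$ each), so an algorithm for \minEg\ yields one for \wonderful.

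\textbf{From \minEg\ to \wonderful.} This is the more delicate direction and I expect it to be the main obstacle. The key observation is that the set of candidate optimal values for the egalitarian cost is small: it is contained in the set $V = \set{\cost_i(s) \mid i \in N, s \in [n]}$, which has size $O(n^2)$. For a fixed threshold $\theta \in V$, define for each agent $i$ the approval set $S_i^\theta = \set{s \in [n] \mid \cost_i(s) \leq \theta}$; then a partition has egalitarian cost at most $\theta$ iff it is wonderful with respect to the approval sets $(S_i^\theta)_{i \in N}$. Thus one can decide whether the minimum egalitarian cost is at most $\theta$ by a single call to \wonderful\ on the derived instance. To find the optimum, I would sort the $O(n^2)$ values of $V$ and binary search over them (or simply iterate), making $O(\log n)$ (resp.\ $O(n^2)$) calls to \wonderful; in either case this is polynomial. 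A subtlety worth spelling out is that the \wonderful\ oracle as stated takes \emph{interval} approval sets only in its \fauxsc{Hiking} form, but the plain \fauxsc{Wonderful-Partition} problem in the excerpt allows arbitrary $S_i \subseteq [n]$, which is what we need here since $S_i^\theta$ need not be an interval; I would invoke the general \wonderful\ problem, matching the proposition's statement.

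\textbf{Wrapping up.} Combining the two reductions gives polynomial-time equivalence. I would also note that both reductions are constructive in the sense that a witnessing partition transfers back and forth: a wonderful partition for $(S_i^{\theta^*})$ with $\theta^*$ the optimal value is precisely a partition attaining the minimum egalitarian cost, and conversely. The only real care needed is the bookkeeping on the candidate value set $V$ and the observation that thresholding a cost function yields exactly the ``approved sizes'' semantics of \wonderful; everything else is routine. This sets up the later refinement in the paper, where the structure of single-peaked $\cost_i$ makes each $S_i^\theta$ an interval, letting one call the faster \fauxsc{Hiking} algorithm instead of the general one.
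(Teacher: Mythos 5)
Your proposal is correct and follows essentially the same route as the paper: the forward direction uses the identical $0/1$ cost encoding, and the reverse direction thresholds the $O(n^2)$ distinct cost values and binary searches (or iterates) over them, invoking the general \wonderful{} oracle on the thresholded approval sets. The paper's proof additionally spells out the monotonicity of feasibility in the threshold that justifies the binary search, which you leave implicit but which is immediate.
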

\begin{proof}
A \wonderful\ instance can be transformed into a \minEg\ instance by setting for each agent a cost function $cost_i$ where $cost_i(j)=1$ if agent $i$ does not approve coalition size $j$ and $cost_i(j)=0$, otherwise. Hence, the \wonderful\ instance
is a yes instance if \minEg\ is $0$ and it is a no instance, otherwise. 

If we have a \minEg\ instance, there are at most $n^2$ distinct values $cost_i(j)$. Assume there are $k$ distinct such values and let us sort them from the lowest to the highest, namely, $c_1< \ldots < c_k$. For each value~$c_h$, we can define a \wonderful\ instance where a coalition of size $j\in [n]$ is approved by agent~$i$ if and only if $cost_i(j)\leq c_h$. We can therefore determine if there exists a partition having egalitarian welfare of at most $c_h$ by solving \wonderful\ on the just described instance. 
Clearly, if there exists a partition having an egalitarian cost of at most $c$ then there exists a partition having an egalitarian cost of value at most $c'$ for each $ c \leq c '$. Conversely, if there is no partition having an egalitarian welfare of at most $c$ then there is no partition having an egalitarian cost of at most $c''$, for each $0\le c''\leq c$. Therefore, we can use binary search among the possible values $c_1, \ldots,  c_k$ to find the minimum value $c$ such that a partition having an egalitarian welfare of at most $c$ exists. This solves \minEg. 
\end{proof}

\begin{corollary}\label{cor:solvingEgalitarian}
    If \wonderful\ can be decided in time $T(n)$, then \minEg\ can be solved in time $O((n^2 + T(n))\cdot \log_2 n)$.
\end{corollary}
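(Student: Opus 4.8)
The plan is to make the binary-search argument in the proof of \Cref{lemm:equivalence} quantitative. The key observation is that the only numbers appearing in a \minEg\ instance are the costs $\cost_i(j)$ for $i,j \in [n]$, of which there are at most $n^2$; hence the number $k$ of \emph{distinct} values among them satisfies $k \le n^2$, and therefore $\log_2 k \le 2\log_2 n$. First I would read off all these values, deduplicate, and sort them as $c_1 < \dots < c_k$; this preprocessing costs $O(n^2\log n)$, which will be dominated by the final bound.

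Next, for each candidate threshold $c_h$ I would build, exactly as in \Cref{lemm:equivalence}, the \wonderful\ instance $I_h$ in which agent $i$ approves coalition size $j$ if and only if $\cost_i(j)\le c_h$; constructing $I_h$ takes $O(n^2)$ time. By the reasoning in that proof, $I_h$ is a yes-instance precisely when there is a partition of egalitarian cost at most $c_h$, and (since enlarging the threshold only enlarges approval sets) this predicate is monotone nondecreasing in $h$. A partition always exists, so $I_k$ is always a yes-instance; thus I would binary search over $h \in \{1,\dots,k\}$ for the smallest index $h^\star$ with $I_{h^\star}$ a yes-instance, and output $c_{h^\star}$ as the minimum egalitarian cost. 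If an optimal partition itself is desired, one further call to a constructive version of \wonderful\ on $I_{h^\star}$ recovers it without affecting the asymptotics.

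Each iteration of the binary search does $O(n^2)$ work to build the instance plus $T(n)$ work to decide it, and there are $O(\log k) = O(\log_2 n)$ iterations, giving $O((n^2 + T(n))\cdot \log_2 n)$; adding the one-time sorting cost $O(n^2\log n)$ leaves the bound unchanged. I do not anticipate a genuine obstacle: the only points that need care are the accounting that $k = O(n^2)$ (so the logarithmic factor is $O(\log n)$, independent of the bit-lengths of the cost values), and the observation that both the per-iteration $O(n^2)$ instance construction and the initial $O(n^2\log n)$ sort are absorbed into the claimed running time.
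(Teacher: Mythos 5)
Your proposal is correct and follows essentially the same route as the paper: the corollary is just the quantitative accounting of the binary search from the proof of \Cref{lemm:equivalence}, using that there are at most $n^2$ distinct cost values (so $O(\log_2 n)$ iterations) and that each iteration costs $O(n^2)$ for instance construction plus $T(n)$ for the decision. Your additional remarks on the one-time sorting cost and the monotonicity of the yes/no answer in the threshold are exactly the points that make the bound go through.
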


Given the polynomial time equivalence between these two problems, it follows that solving \minEg\ is in general computationally intractable
because \wonderful\ is NP-hard as soon as the approval sets are not intervals~\cite{Darmann_journal}. 

Nevertheless, there exists a special class of costs that can be solved using our dynamic programming approach for \wonderful\ described in Section~\ref{sec:polyDP}. Such a class is a generalization of what we discussed in \Cref{sec:distanceMin}: Namely, each agent has an ideal group size $s_i$, and the closer the coalition size is to $s_i$ the lower the cost.  We align to the Hedonic Games literature calling this property \emph{naturally single-peakedness}.\footnote{We say \emph{naturally} as general single-peakedness may be defined w.r.t.\ any fixed ordering of coalition sizes. In our setting, we consider cost functions that are single-peaked w.r.t.\ the natural ordering $1, \dots, n$.}

\begin{definition}
  A cost function $cost:\mathbb{N} \rightarrow \mathbb{R}_{\geq 0}$ is said to be \emph{naturally single-peaked} if there exists an ideal group size, a \emph{peak}, $p\in [n]$ such that $h<k \leq p$ or $h>k\geq p$ imply that $ cost(k) < cost(h)$ holds.  
\end{definition}

We observe that the reduction from \minEg\ to \wonderful\ described in \Cref{lemm:equivalence} produces an interval instance in the case of naturally single-peaked cost functions. With this, we obtain the following theorem. 

\begin{restatable}{theorem}{naturallysinglepeak}
\label{thm:naturally-single-peak}
\minEg\ for naturally single-peaked costs can be solved in time $O(n^5\log_2 n)$.
\end{restatable}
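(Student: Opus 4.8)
The plan is to combine \Cref{cor:solvingEgalitarian} with \Cref{thm:hiking-is-poly-time} via the reduction sketched immediately before the theorem statement. First I would recall the reduction from \minEg\ to \wonderful\ described in the proof of \Cref{lemm:equivalence}: for a fixed threshold $c_h$, one forms the instance in which agent $i$ approves coalition size $j$ if and only if $\cost_i(j) \leq c_h$. The key observation is that when $\cost_i = \cost(s_i, \cdot)$ is naturally single-peaked with peak $s_i$, the set $\set{j \in [n] \mid \cost_i(j) \leq c_h}$ is an \emph{interval} containing $s_i$: since $\cost_i$ is strictly decreasing on $[1, s_i]$ and strictly increasing on $[s_i, n]$ (more precisely, the strict monotonicity from the definition, which disallows plateaus on either side of the peak), the sublevel set is a contiguous block $\set{\ell_i, \ldots, r_i}$ around $s_i$. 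Hence every thresholded instance produced by the reduction is an instance of \wonderfulIntervals, which by \Cref{thm:hiking-is-poly-time} can be decided in $T(n) = O(n^5)$ time.

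Next I would invoke \Cref{cor:solvingEgalitarian}: since each thresholded instance is an interval instance decidable in $T(n) = O(n^5)$, and since the binary search of \Cref{lemm:equivalence} only ever queries such thresholded instances, \minEg\ for naturally single-peaked costs is solvable in time $O((n^2 + T(n)) \cdot \log_2 n) = O((n^2 + n^5)\log_2 n) = O(n^5 \log_2 n)$. A small point worth spelling out is that the $n^2$ term in the corollary accounts for enumerating and sorting the at most $n^2$ distinct values $\cost_i(j)$ that serve as candidate thresholds; this is dominated by the $n^5$ term anyway. I would also remark that the peaks $s_i$ need not be distinct and the individual cost functions may differ across agents sharing a peak, matching the setting illustrated in \Cref{fig:utilities}~(d).

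The only real subtlety — and the one step I would be careful about — is verifying that the sublevel sets are genuinely intervals of integers in $[n]$ and, in particular, that they are nonempty for all thresholds $c_h$ that can arise as the optimum. Nonemptiness at the relevant thresholds is automatic: the minimum egalitarian cost is always attained by \emph{some} partition, so for $c_h$ equal to that optimum every agent's sublevel set is nonempty (it contains the size of that agent's coalition). The interval structure itself is immediate from the strict unimodality in the definition of naturally single-peaked: if $\cost(k) \leq c_h$ and $\ell$ lies between $k$ and the peak $p$, then $\cost(\ell) \leq \cost(k) \leq c_h$ as well, so the sublevel set is closed under "moving toward the peak," which for a subset of a line is exactly the condition of being an interval containing $p$. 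I would close by noting that, just as in \Cref{lemm:equivalence}, a partition achieving the optimal egalitarian cost (not merely the value) can be recovered by additionally running the constructive version of \Cref{thm:hiking-is-poly-time} at the optimal threshold, without affecting the asymptotic running time.
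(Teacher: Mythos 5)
Your proof is correct and follows essentially the same route as the paper's: threshold the costs, observe that natural single-peakedness makes each sublevel set an interval of sizes around the peak so that every thresholded instance is a \wonderfulIntervals\ instance, and then combine \Cref{cor:solvingEgalitarian} with \Cref{thm:hiking-is-poly-time} to get $O(n^5 \log_2 n)$. Your additional care in verifying the interval structure and nonemptiness of the sublevel sets only makes explicit what the paper's terser argument leaves implicit.
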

\begin{proof}
   We can use a similar idea as in the polynomial time reduction from \minEg\ to \wonderful, as described in \Cref{lemm:equivalence}. In \minEg\ we are looking for a partition that minimizes the cost of the agent having the highest disutility. Here, since we consider naturally single-peaked preferences, the disutility of an agent $i$ is decreased the closer the size of their assigned group $|\pi(i)|$ is to their ideal group size $s_i$ (the peak value of agent~$i$). We can now fix, for a value $c$, some distances $\Delta, \Delta'$ such that $cost_i(s)\leq c$ for each $s\in [s_i - \Delta, s_i + \Delta']$. Therefore, we define that any agent~$i$ approves their assigned group size if it is in the interval $[s_i - \Delta, s_i + \Delta']$; this defines the instance of \wonderfulIntervals\ to be solved for determining if there exists a partition of having an egalitarian cost of at most $c$.
   
   Now, applying \Cref{cor:solvingEgalitarian} and \Cref{thm:hiking-is-poly-time}, the statement follows.
\end{proof}

\noindent Notice that the computational complexity guaranteed by \Cref{thm:distance_min} is way more efficient than the one of \Cref{thm:naturally-single-peak}. However, the former holds true for very specific naturally single-peaked costs, while the latter establishes the tractability of \minEg\ whenever agents have naturally single-peaked costs.

\section{Conclusions}
We resolved an open problem posed a decade ago by Gerhard Woeginger by giving a polynomial-time algorithm via establishing a connection to a version of rectangle stabbing, and investi\-ga\-ted several interesting variants. We give a complete picture on the tractability of the Hiking Problem itself and show that maximizing the number of satisfied participants or deleting the minimal number such that the remaining participants admit a wonderful partition is polynomial time solvable. The tractability of both the original decision-, and the according optimization problems is crucially enabled by the existence of optimal solutions that exhibit simple and intuitive structural properties, fueling the algorithmic solutions based on Dynamic Programming. Last but not least, we employ our solution to efficiently compute a partition that maximizes
the egalitarian welfare for anonymous naturally single-peaked Hedonic Games. We note that our approach also works for general interval instances, that is, for a given permutation $\sigma$ of numbers $1,\dots,n$, intervals are defined over the numbers in order of the permutation, i.e., $\sigma(1), \dots ,\sigma(n)$. This extends our results from naturally single-peaked to general single-peaked cost functions. The problem of minimizing utilitarian cost for general single-peaked cost functions remains open.

\bibliography{bibliography}

\appendix

\section{Proofs Omitted From \Cref{sec:distanceMin}}\label{app:omitted}

\sizemonotonicity*
\begin{proof} 
We begin with property (a), i.e.\ the utilitarian case. Assume towards a contradiction that no optimal solution is size-monotonic. Let $\pi$ be an optimal solution minimizing the number of pairs $(i, j)$ of agents in $\pi$ such that $i < j$ and $|\pi(i)| > |\pi(j)|$. We call such pairs \emph{bad}. Since $\pi$ is not size-monotonic, at least one bad pair exists and let $(i, j)$ be such a bad pair. For ease of notation, denote $|\pi(i)|$ and $|\pi(j)|$ by $a$ and $b$, respectively. Consider the alternative solution $\pi'$ which is identical to $\pi$ except that $i$ and $j$ swap coalitions. Note that $\pi'$ has strictly fewer bad pairs than $\pi.$
Now, consider
\[
\cost(\pi) - \cost(\pi') = \cost(s_i,a) + \cost(s_j,b) - \cost(s_i, b) - \cost(s_j,a).
\] 
If we could show that this quantity is non-negative, then we would get that $\pi'$ is also an optimal solution, but one having strictly less bad pairs, contradicting minimality. We will now show exactly this. Since $s_i \leq s_j$ and $b < a,$ it follows that we only need to consider the following six cases.

\paragraph*{Case 1a: $s_i \leq s_j \leq b < a$.}
 The inequality $\cost(s_i,a) + \cost(s_j,b) \geq \cost(s_i, b) + \cost(s_j,a)$ follows immediately from the quadrangle inequality.
\paragraph*{Case 2a: $s_i \leq b \leq s_j \leq a$.}
From quadrangle inequality we get $\cost(s_i,a) + \cost(b,b) \geq \cost(s_i,b) + \cost(b,a)$, i.e.\ the triangle inequality. By monotonicity we conclude
\begin{align*}
    \cost(s_i,a) + \cost(s_j,b) &\geq \cost(s_i,a) + \cost(b,b) \geq \cost(s_i,b) + \cost(b,a)\\
    &\geq \cost(s_i,b) + \cost(s_j,a).
\end{align*}
\paragraph*{Case 3a: $s_i \leq b < a \leq s_j$.}
By monotonicity we have $\cost(s_i,a)\geq \cost(s_i,b)$ and $\cost(s_j,b) \geq \cost(s_j,a)$. The inequality follows immediately.
\paragraph*{Case 4a: $b \leq s_i \leq s_j \leq a$.}
We use again monotonicity to observe $\cost(s_i,a) \geq \cost(s_j,a)$ and $\cost(s_j,b) \geq \cost(s_i,b)$. This obtains the desired result.
\paragraph*{Case 5a: $b \leq s_i \leq a < s_j$.}
We use monotonicity and reverse quadrangle inequality to get
\begin{align*}
    \cost(s_i,a) + \cost(s_j,b) &\geq \cost(a,a) + \cost(s_j,b) \geq \cost(s_j,a) + \cost(a,b)\\
    &\geq \cost(s_j,a) + \cost(s_i,b).
\end{align*}
\paragraph*{Case 6a: $b < a \leq s_i \leq s_j$.}
This case follows directly from the definition of the reverse quadrangle property. This finishes the proof for the utilitarian case.

Now, to tackle property (b), the egalitarian case, the argument remains similar, except that now we need to show that $\max \left\{\cost(s_i,a), \cost(s_j,b)\right\} \geq \max \left\{\cost(s_j,a), \cost(s_i,b)\right\}$. We again use the case distinction as above but we only need to assume monotonicity of $\cost$.

\paragraph*{Case 1b: $s_i \leq s_j \leq b < a$.}
 By monotonicity we have
 \begin{align*}
     \max \left\{\cost(s_i,a), \cost(s_j,b)\right\} = \cost(s_i,a) \geq \max\left\{\cost(s_i, b), \cost(s_j,a)\right\}
 \end{align*}
\paragraph*{Case 2b: $s_i \leq b \leq s_j \leq a$.}
We observe
 \begin{align*}
     \max \left\{\cost(s_i,a), \cost(s_j,b)\right\} \geq \cost(s_i,a) \geq \max\left\{\cost(s_i, b), \cost(s_j,a)\right\}.
 \end{align*}
\paragraph*{Case 3b: $s_i \leq b < a \leq s_j$.}
By monotonicity we have $\cost(s_i,a)\geq \cost(s_i,b)$ and $\cost(s_j,b) \geq \cost(s_j,a)$. The inequality follows immediately.
\paragraph*{Case 4b: $b \leq s_i \leq s_j \leq a$.}
We use again monotonicity to observe $\cost(s_i,a) \geq \cost(s_j,a)$ and $\cost(s_j,b) \geq \cost(s_i,b)$. This obtains the desired result.
\paragraph*{Case 5b: $b \leq s_i \leq a < s_j$.}
We get
\begin{align*}
    \max \left\{\cost(s_i,a), \cost(s_j,b)\right\} \geq \cost(s_j,b) \geq \max \left\{\cost(s_j,a),\cost(s_i,b)\right\}.
\end{align*}
\paragraph*{Case 6b: $b < a \leq s_i \leq s_j$.}
The last case can be obtained by
\begin{align*}
      \max \left\{\cost(s_i,a), \cost(s_j,b)\right\} = \cost(s_j,b) \geq \max \left\{\cost(s_j,a),\cost(s_i,b)\right\}.  
\end{align*}
This concludes the proof both for the utilitarian as well as the egalitarian case.
\end{proof}

\lemmacompact*

\begin{proof}
Let $\pi^*$ be an optimal size-monotonic partition, which, by \Cref{lemma:size-monotonic}, must exist. If partition~$\pi^*$ happens to be compact, then we are done. So, we can assume that partition~$\pi^*$ is not compact. Since compactness is violated, there exists a coalition $C\in \pi^*,$ such that $\min\set{h \mid h \in C}=i$ and $\max\set{h \mid h\in C}=j$, and there exists an agent $k\not\in C$ such that $i<k<j$. For agent $k$ there are only two options:
\begin{itemize}
\item[(i)] agent $k$ participates in the hike, i.e., we have $\pi^*(k) = C' \neq C$. By size-monotonicity it follows that $|\pi^*(i)| = |\pi^*(j)| = |\pi^*(k)|$; 
\item[(ii)] agent $k$ does not participate in the hike, i.e., $\pi^*(k) = \bot$.
\end{itemize}
We will now perform transformation steps to partition $\pi^*$ to ensure compactness without sacrificing size-monotonicity or optimality. Step~1 will remove compactness violations of type~$(i)$ and Step~2 will deal with type-$(ii)$-violations.

\textbf{Step 1:} Partition $\pi^*$ contains coalitions of certain sizes.
Given a coalition size $q\in [n]$, we rearrange the coalitions of size $q$ in $\pi^*$ in the following way:
Let $C_1,C_2,\dots,C_\ell$ be the coalitions in $\pi^*$ 
of size $q$,
and let $A_q = \bigcup_{1\leq i \leq \ell}C_i = \{i_1,i_2,\dots,i_{q\cdot\ell}\}$ be the set of agents that are in a coalition of size~$q$ in partition~$\pi^*$. Moreover, we assume that the agents in $A_q$ are sorted in increasing order of their ideal coalition sizes, i.e., we assume that  $s_{i_1} \leq s_{i_2} \leq \dots \leq s_{i_{q\cdot\ell}}$ holds. We now create $\ell$ many new coalitions of size~$q$ by reassigning the agents in $A_q$ as follows: the first $q$ agents $i_1,\dots,i_q$ are assigned to coalition~$C_1^*$, the next $q$ agents $i_{q+1},\dots,i_{2q}$ are assigned to coalition~$C_2^*$, and so on. The last $q$ agents $i_{q\cdot(\ell-1)+1}, \dots, i_{q\cdot\ell}$ are then assigned to coalition~$C_\ell^*$. Then we modify partition~$\pi^*$ by replacing the coalitions $C_1,\dots,C_\ell$ with the coalitions $C_1^*,\dots,C_\ell^*$. Note that after this replacement, partition $\pi^*$ is still size-monotonic, since the respective coalition size stays the same for every agent in $A_q$. 
We iterate this replacement procedure for 
each coalition size $q$.
We end up with a modified partition $\pi^*$ that is still size-monotonic. furthermore, by construction,  since we reassigned agents that are in coalitions having the same sizes in increasing order of their ideal coalition sizes, we cannot have compactness violations of type $(i)$ in the final partition $\pi^*$ at the end of Step 1.   

\textbf{Step 2:} We will remove type-$(ii)$-violations of compactness one by one. For keeping track of our progress, we consider the following measure.
Given a coalition $C$, the \emph{diameter of $C$} is defined as $\text{diam}(C)= \max\set{h \mid h\in C} - \min\set{h \mid h\in C}$. Moreover, the \emph{diameter of a partition $\pi$} is $\text{diam}(\pi)= \sum_{C\in\pi}\text{diam}(C)$. 

Assume that there is a coalition $C\in \pi^*$ with $\min\set{h \mid h \in C}=i$ and $\max\set{h \mid h\in C}=j$, that there exists an agent $k\not\in C$ such that $i<k<j$, and that we have $\pi^*(k) = \bot$. There are two cases, depending on agent $k$'s ideal coalition size $s_k$. If $s_k \geq |C|$, then we will change partition $\pi^*$ by swapping agents $k$ and $j$, i.e., agent $j$ will be excluded from the hike and agent $k$ will be assigned to coalition $C$ instead. If $s_k < |C|$, then we swap agents $i$ and $k$. Since $s_i \leq s_k \leq s_j$, none of those swaps can increase the social cost of partition $\pi^*$, i.e., partition $\pi^*$ stays optimal, and the partition $\pi^*$ stays size-monotonic. Moreover, in both cases the diameter of partition $\pi^*$ is strictly decreased. This implies that after finitely many such exchange steps, this process must stop and all type-$(ii)$-violations are resolved.  

It remains to show that the exchanges done in Step 2 do not create new type-$(i)$-violations of compactness. We show this via a proof by contradiction. Assume that for some coalition $C \in \pi^*$ with $\min\set{h \mid h \in C}=i$ and $\max\set{h \mid h\in C}=j$ we have exchanged agent $k$ with agent $j$ in Step 2 and this creates a new type-$(i)$-violation of compactness, that is, there is some agent $k'$ with $i<k'<j'$, where $j'=\max\set{h \mid h\in C}$ after the exchange. However, also before the exchange of $j$ and $k$, we had that $i<k'<j'<j$ holds, which implies that agent $k'$ already was a type-$(i)$-violation of compactness. This contradicts that a new type-$(i)$-violation was introduced. The argument for the case where agents $i$ and $k$ are exchanged is completely analogous.  

Thus, by first performing Step 1 to remove all type-$(i)$-violations of compactness and then performing Step 2 to remove all type-$(ii)$-violations of compactness we will eventually transform the optimal size-monotonic partition $\pi^*$ into an optimal partition that this still size-monotonic but also compact. 
\end{proof}

\section{Hardness for Approval Sets of Size 2}\label{app:hardness}

We have shown that whenever the approval set of each agent is an interval, solving \wonderful\ is possible in polynomial time. Clearly, whenever all agents have approval sets of size $1$, the existence of a wonderful partition is polynomial-time decidable as well: This can also be seen as a special case of interval instances (although it can also be solved directly, as explained in the introduction). In general, the approval sets are not necessarily intervals, and without any assumption on the structure of the approval sets, the \wonderful\ problem is NP-complete even if the size of each approval set is at most $3$. This follows from the hardness proof of Woeginger~\cite{woeginger2013core} and the fact that \fauxsc{Exact Cover By 3-Sets} is NP-hard (see below for the definition).\footnote{In fact, as a curiosity, even if every element of the ground set appears in exactly three triples and any two triples overlap in at most one element~\cite{Gonzalez85}.} Darmann et al.~\cite{Darmann_journal} establish an even stronger version of this result for the case where all approval sets are of size at most $2$.

In this section, we precisely map the boundary of tractability of \wonderful\ with respect to the approval set size. While the case with approval set size 1 can be solved in polynomial time, we now show NP-completeness if the approval sets have size \emph{exactly} 2. In particular, we will show the following:
\begin{theorem}\label{thm:hardness2approval}
Deciding \wonderful\ is NP-complete, even if the approval sets are of size \emph{exactly} 2.
\end{theorem}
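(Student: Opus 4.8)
Membership in NP is immediate and needs no structural assumption: a wonderful partition is a polynomial-size certificate, and checking $\modulus{\pi(i)} \in S_i$ for every agent $i$ takes polynomial time. For hardness, the plan is to exploit a clean reformulation of \wonderful\ with size-$2$ approval sets as a graph orientation problem — the same idea that streamlines the previously known size-at-most-$2$ hardness and lets us push it to size \emph{exactly} $2$. Given an instance with all $\modulus{S_i} = 2$, form the multigraph $G$ on vertex set $\set{1, \dots, n}$ (the potential coalition sizes) with one edge $\set{a_i, b_i}$ per agent $i$, where $S_i = \set{a_i, b_i}$; there are no loops since $a_i \neq b_i$, and parallel edges encode agents with identical approval sets. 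Orienting agent $i$'s edge toward an endpoint $v$ means putting $i$ into a group of size $v$, and a set of agents all choosing size $v$ splits into full groups of size $v$ exactly when its cardinality is divisible by $v$. Hence $G$ admits a wonderful partition if and only if its edges can be oriented so that every vertex $v$ has in-degree divisible by $v$; conversely, any multigraph with $n$ edges and vertex labels in $\set{1, \dots, n}$ directly encodes such an instance. So it suffices to prove NP-hardness of this ``divisible-in-degree orientation'' problem.

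I would reduce from an NP-hard constraint problem — \fauxsc{Positive-1-in-3-SAT} or \fauxsc{Exact Cover By 3-Sets} — realizing its constraints with three gadgets built solely from in-degree divisibility. A \emph{forced edge}: a pendant vertex of label at least $2$ can only have in-degree $0$, so its incident edge is forced to point away from it (and a two-vertex chain of such pendants instead forces an edge toward a prescribed vertex). A \emph{variable gadget}: a vertex whose label equals its degree can only have in-degree $0$ or its full degree, so all its incident edges must be oriented uniformly, which we read as a Boolean value. A \emph{clause gadget} for $\set{x, y, z}$: a vertex of label $3$ receiving two forced in-edges together with the three literal edges coming from the variable vertices has in-degree in $\set{2,3,4,5}$, which is divisible by $3$ if and only if exactly one literal edge points in — exactly the one-in-three condition. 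Wiring each variable's literal edges to the clauses that contain it then yields a graph whose valid orientations are in bijection with the satisfying assignments, and the construction is clearly polynomial.

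The main obstacle I anticipate is bookkeeping rather than ideas. Because a vertex constraint is ``symmetric'' — in-degree $0$ is always admissible — the clause gadget cannot be a single constraint, which is why it must absorb the ``all literals false'' case through the two forced in-edges; and the variable gadget pins the label of a variable vertex to its literal-degree, so distinct variables must be padded so as to occur a distinct number of times, while all labels used (occurrence counts, the clause label $3$, pendant labels) must be kept at most $n = \modulus{E(G)}$, which one arranges by padding the source instance and blowing it up by a polynomial factor. One also adds dummy agents — for instance pairs with approval set $\set{1, 2}$ placed together into a group of size $2$ — to make the agent count exactly $n$ without perturbing the gadgets, and checks that every approval set produced has size exactly $2$. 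Together with membership in NP, this establishes NP-completeness.
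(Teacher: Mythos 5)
Your first step --- membership in NP plus the reformulation of size-exactly-$2$ instances as the problem of orienting a multigraph on vertex set $\set{1,\dots,n}$ so that each vertex $v$ has in-degree divisible by $v$ --- is exactly the route the paper takes, and that part is correct. The gap is in the reduction to this orientation problem, and it is structural rather than bookkeeping: in the orientation problem a ``label'' is not an attribute you assign to a vertex, it \emph{is} the vertex. There is exactly one vertex of label $3$ in the entire graph, so your clause gadget (``a vertex of label $3$ receiving two forced in-edges together with the three literal edges'') can be instantiated at most once; a second clause would have to reuse the same vertex $3$, merging its divisibility constraint with the first clause's. You flag the analogous collision for variable vertices (distinct variables need distinct degrees, hence distinct labels) but not for clauses or for the pendant vertices realizing forced edges, each of which likewise consumes a fresh label $\geq 2$. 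The construction as written therefore does not go through for instances with more than one clause.

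The fix is to give clause $j$ its own label $L_j$ together with $L_j-1$ forced in-edges, so that its in-degree lies in $\set{L_j-1,\dots,L_j+2}$ and is divisible by $L_j$ iff exactly one literal edge points in; and to realize the forced in-edges not by many distinct pendant vertices (which would burn too many labels) but by a single auxiliary vertex whose label exceeds its degree, joined to the clause vertex by $L_j-1$ parallel edges --- such a vertex must have in-degree $0$, forcing all parallel edges inward. At that point you have essentially rediscovered the paper's triple gadget; the paper reduces from \fauxsc{Exact Cover By 3-Sets}, builds one such gadget per triple with label $p\cdot\delta+4j$, and uses element gadgets (a vertex of label $i\cdot\delta$ pre-loaded with $i\cdot\delta-1$ forced parallel in-edges) to encode ``covered exactly once,'' with the offsets chosen precisely to keep all labels distinct and bounded by the edge count. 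A secondary slip: a single pendant of label $\geq 2$ already forces its edge \emph{toward} its neighbor, so the parenthetical two-vertex chain is unnecessary and, as described, does not force what you claim. You would also need to verify that padding a variable vertex to a distinct degree uses edges orientable in both of its states (e.g., edges to vertex $1$), which your sketch leaves open.
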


We first observe that \wonderful\, when restricted to the case of approval sets of size exactly~2, is equivalent to a graph-theoretic problem we call \fauxsc{Orientation}. For this, we write $d^+(v)$ to denote the in-degree of a vertex $v \in V$ in a directed graph $\Vec{G} = (V, \Vec{E})$. 

\begin{tcolorbox}
\textbf{\fauxsc{Orientation}}\\
\textbf{Input}: An undirected graph $G = (V, E)$ with $V = [n],$ admitting parallel edges but no self-loops. \\
\textbf{Question}: Does there exist an orientation of the edges 
$\Vec{G}$ such that $d^+(i) \equiv 0 \pmod i$ for each $i\in V$?
\end{tcolorbox}

We now show that the two problems are equivalent. 

Indeed, an instance of \wonderful\ where all agents approve exactly two sizes can be transformed into an equivalent instance of \fauxsc{Orientation} by representing each agent approving sizes $i, j \in [n]$ by an undirected edge $(i, j)$. Orienting this edge to either node~$i$ or node~$j$ models that the respective agent is part of a partition of size $i$ or $j$, respectively. If $d^+(i) = k\cdot i$ for some $k\geq 0$ then this means that $k$ partitions with size $i$ will be created.
Conversely, an instance of \fauxsc{Orientation} with $m$ edges can be transformed to an instance of \wonderful\ with $m$ agents, where each edge $(i, j)$ corresponds to an agent with approval set $\set{i,j}.$

Using the equivalence between \wonderful\ and \fauxsc{Orientation}, we will prove~\Cref{thm:hardness2approval} by reducing \fauxsc{Exact Cover By 3-Sets} (\fauxsc{X3C}) to \textsc{Orientation}. \fauxsc{X3C} is well-known to be NP-hard~\cite{garey1979computers} and it is defined as follows:

\begin{tcolorbox}
\textbf{\fauxsc{Exact Cover By 3-Sets (X3C)}} \\
\textbf{Input}: A ground set $X=\set{x_1, \dots, x_{3k}}$ and a collection $\mathcal{C}$ of 3-element subsets (triples) of $X$. \\
\textbf{Question}: Does there exist a subset $\mathcal{C}' \subseteq \mathcal{C}$ such that $\bigcup_{C\in\mathcal{C}'}C=X$ and $\modulus{\mathcal{C}'}=k$, i.e., $\mathcal{C}'$ is an exact cover of $X$?
\end{tcolorbox}

\begin{proof}[Proof of \Cref{thm:hardness2approval}]
Let us assume that $\modulus{\mathcal{C}}= q$ and let us enumerate all the triples as $C_1, \dots, C_q$. For simplicity, we fix  $p=3k$. Notice that any element $x$ in the ground set can be contained in at most ${p-1 \choose 2} < p^2$ triples. Let us denote by $\delta$ the highest number of such occurrences. Our reduction is as follows:
\begin{itemize}
    \item For each element in the ground set $x_i\in X$, for $i\in [p]$, we create an \emph{element gadget} which consists of nodes $i\cdot \delta$ and $i\cdot\delta +1$, respectively. Such two nodes are connected by $i\cdot\delta -1 $ many edges. We may refer to these edges as element-($x_i$)-edges. Moreover, we call node $i\cdot\delta$ the element-($x_i$)-node.

    \item For each triple $C_j \in \mathcal{C}$, for $j\in [q]$, we create a \emph{triple gadget} which consists of two nodes having value $p\cdot\delta + 4j$  and $p\cdot\delta + 4j -3$, respectively. Such two nodes are connected by $p\cdot\delta + 4j -3$ many edges. We will refer to these edges as triple-($C_j$)-edges. Moreover, we call node $p\cdot\delta + 4j$ the triple-($C_j$)-node.

    \item Finally, for each $x_i\in X$ and $C_j\in\mathcal{C}$ such that $x_i\in C_j$, we connect the nodes $i\cdot\delta$ and $p\cdot\delta + 4j$ with one edge.
\end{itemize}
Clearly, the reduction is polynomial. Moreover, the constructed graph is well-defined (there is no overlap in gadgets corresponding to different compounds). See Figure~\ref{fig::gadgets} for an example.
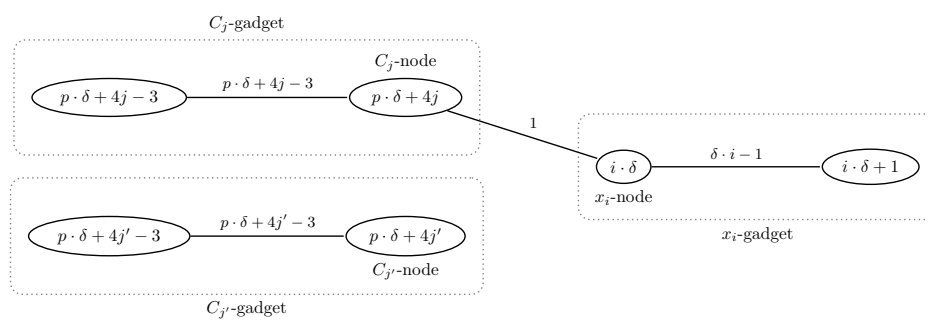
\begin{figure}[h]
\centering
  \scalebox{0.65}
  {
  \begin{tikzpicture}[-,>=stealth',shorten >=1pt,auto,node distance=2cm,
  thick,main node/.style={ellipse,draw,font=\sffamily\bfseries}]
  \node[main node] (1) {$p\cdot \delta +4j -3$};
  \node[main node] (2) [right of=1, xshift=4cm, label={above: $C_j$-node}] {$p\cdot \delta +4j$};
  \node[main node] (3) [below right of=2, xshift=3cm, label={below: $x_i$-node }] {$i\cdot\delta$};
    \node[main node] (8) [right of=3 , xshift=3cm] {$i\cdot\delta + 1$};
  \node[main node] (5) [below left of=3, xshift=-3cm, label={below: $C_{j'}$-node}] {$p\cdot \delta +4j'$};
  \node[main node] (6) [left of=5, xshift=-4cm]{$p\cdot \delta +4j' -3$};
  \node[draw, rounded corners=7pt, dotted, gray, fit=(1) (2),inner ysep=22pt,inner xsep=10pt,label={above: $C_j$-gadget}] {};
  \node[draw, rounded corners=7pt, dotted, gray, fit=(5) (6),inner ysep=22pt,inner xsep=10pt, label={below: $C_{j'}$-gadget}] {};
  \node[draw, rounded corners=7pt, dotted, gray, fit=(3) (8),inner ysep=20pt,inner xsep=10pt,label={below: $x_i$-gadget }] {};
  \path[every node/.style={font=\sffamily\small}]
  (1) edge node {$p\cdot \delta +4j -3$} (2)
  (2) edge node {$1$} (3)
  (6) edge node {$p\cdot \delta +4j' -3$} (5)
  (3) edge node {$\delta\cdot i -1$} (8)
  ;
  \end{tikzpicture}
  }
  \caption{Gadgets -- On the left, above (resp. below) the triple gadget for $C_j$ (resp. $C_{j'}$); on the right, the element gadget for $x_i$. The picture shows the set-up of the gadgets if $x_i \in C_j$ but $x_i \notin C_{j'}$. Labels on edges represent the number of parallel edges between the two nodes.}
  \label{fig::gadgets}
 \end{figure}

The idea is to determine the exact cover by means of the orientation of edges connecting elements and triple gadgets. Specifically, whenever a triple $C_j$ is in the covering set, the three edges connecting the gadget $C_j$ with the corresponding element gadgets
must all be oriented towards the corresponding element nodes. In turn, if $C_j$ is not in the covering set, the orientation of these edges must be towards the triple-($C_j$)-node.

With this, it follows that whenever an exact cover exists, an orientation of $G$ exists as well. In particular, for any triple $C_j$ in the covering set we can orient all edges incident to the triple-($C_j$)-node away from the triple-($C_j$)-node, and in the opposite direction, otherwise. Moreover, every element edge is directed towards the corresponding element node.

It remains to show that whenever an orientation exists, an exact cover exists as well.
The rest of this proof is established by the following observations:
\begin{enumerate}
    \item In any feasible orientation, the element edges are always oriented toward the element node. Otherwise, the orientation will not be feasible since the non-element node endpoint of every element edge has a label that is higher than the number of its incident edges.
    \item Denote by $t_i$ the number of triples containing $x_i$, the element-($x_i$)-node has $i\cdot\delta-1+t_i$ incident edges. By (1), the in-degree in a feasible orientation is at least $i\cdot\delta-1$. Since $t_i\leq \delta$, in a feasible orientation of $G$ the in-degree of $i\cdot\delta$ is exactly its value. As a consequence, there is only one incoming edge from triple gadgets, and therefore each element is covered by exactly one triple.
    \item Last but not least, in a feasible orientation, a triple cannot be only partially used. Specifically, for a triple $C_j$ with triple-($C_j$)-node $v$ either all edges incident to $v$ must be all oriented to $v$ or all away from $v$. This is ensured by the fact that in any feasible orientation the in-degree of a triple $C_j$ is either $0$ or $p\cdot\delta +4j$. \qedhere
\end{enumerate}
\end{proof}


\end{document}